  \lstdefinelanguage{pseudo}{
    morekeywords={if,elseif,then,return,end,choose,guess,when,for,foreach,case},
    morekeywords=[3]{false,true,and,or,not},
    morecomment=[l]{//}
  }
\newcommand{\Model}{\mathcal{M}}
\newcommand{\Var}{\textnormal{V}}
\newcommand{\MC}[1]{#1\textnormal{-MC}}
\newcommand{\SAT}[1]{#1\textnormal{-SAT}}
\newcommand{\FO}{\ensuremath{\logicFont{FO}}\xspace}
\newcommand{\D}{\ensuremath{\logicFont{D}}\xspace}
\newcommand{\MIL}{\ensuremath{\logicFont{MIL}}\xspace}
\newcommand{\ML}{\ensuremath{\logicFont{ML}}\xspace}
\newcommand{\MDL}{\ensuremath{\logicFont{MDL}}\xspace}
\newcommand{\EMDL}{\ensuremath{\logicFont{EMDL}}\xspace}
\newcommand{\MILMC}{\MC{\MIL}}
\newcommand{\MILSAT}{\SAT{\MIL}}
\newcommand{\MDLMC}{\MC{\MDL}}
\newcommand{\MDLSAT}{\SAT{\MDL}}
\newcommand{\NP}{\ensuremath{\complexityClassFont{NP}}}
\newcommand{\NEXP}{\complexityClassFont{NEXP}}
\newcommand{\dep}[1][\cdot]{{\textnormal{=}\ifthenelse{\equal{#1}{}}{}{(\nobreak#1\nobreak)}}}
\newcommand{\MILD}{\ensuremath{\ML(D)}\xspace}
\newcommand{\succTeamSet}[1]{R\langle#1\rangle}
\newcommand{\succTeam}[2][R]{#1(#2)}
\newcommand{\md}[1]{\mathop{md}(#1)}
\newcommand{\N}{\mathrm{N}}
\newcommand{\complexityClassFont}[1]{\protect\ensuremath{\mathrm{#1}}\xspace}
\newcommand{\problemFont}[1]{\textsc{#1}\xspace}
\newcommand{\logicFont}[1]{\textsf{\normalshape#1}\xspace}
\newcommand{\set}[1]{\ensuremath\left\{#1\right\}}
\newcommand{\mathtext}[1]{\ensuremath{\mathrm{\text{#1}}}}
\newcommand{\draftnote}[2]{\footnote{\textbf{#1}: #2}}
\newcommand{\hsnote}[1]{\draftnote{HS}{#1}}
\newcommand{\hvnote}[1]{\draftnote{HV}{#1}}
\newcommand{\card}[1]{\left| #1 \right|}
\newcommand\restr[2]{{
  \left.\kern-\nulldelimiterspace
  #1
  \vphantom{\big|}
  \right|_{#2}
  }}
\newlength\problemlength
\newcommand\decisionproblem[3]{%
\begin{nopagebreak}
\begin{list}{}{\labelwidth\problemlength \labelsep.7em \rightmargin1.5em
\leftmargin\problemlength \advance\leftmargin by3em
\parsep0ex \itemsep.2ex plus.1ex}
\item[{\sl Problem:\hfill}] {\problemFont{#1}}
\item[{\sl Input:  \hfill}] #2
\item[{\sl Question: \hfill}] #3
\end{list}
 \end{nopagebreak}
}
\newif\iflongproofs
\begin{document}

\begin{frontmatter}

	\title{Modal Independence Logic}\thanks{The first author was supported by the Academy of Finland grants 264917 and 275241}
	\author{Juha Kontinen\textsuperscript{*}}
	\author{Julian-Steffen M\"uller\textsuperscript{$\dagger$}}
	\author{Henning Schnoor\textsuperscript{$\ddagger$}}
	\author{Heribert Vollmer\textsuperscript{$\dagger$}}
	\address{\textsuperscript{*}University of Helsinki, Department of Mathematics and Statistics, \\P.O. Box 68, 00014 Helsinki, Finland.}
	\address{\textsuperscript{$\dagger$}Leibniz Universit\"at Hannover, Institut f\"ur Theoretische Informatik \\ Appelstr.~4, 30167 Hannover}
	\address{\textsuperscript{$\ddagger$}Institut f\"ur Informatik, Christian-Albrechts-Universit\"{a}t zu Kiel\\ 24098 Kiel}

\begin{abstract}
This paper introduces modal independence logic \MIL, a modal logic that can explicitly talk about independence among propositional variables. Formulas of \MIL are not evaluated in worlds but in sets of worlds, so called \emph{teams}. In this vein, \MIL can be seen as a variant of V\"a\"an\"anen's modal dependence logic \MDL. We show that \MIL embeds \MDL and is strictly more expressive. However, on singleton teams, \MIL is shown to be not more expressive than usual modal logic, but \MIL is exponentially more succinct.
Making use of a new form of bisimulation, we extend these expressivity results to modal logics extended by various generalized dependence atoms. 
We demonstrate the expressive power of \MIL by giving a specification of the anonymity requirement of the \emph{dining cryptographers} protocol in \MIL.  
We also study complexity issues of \MIL and show that, though it is more expressive, its satisfiability and model checking problem have the same complexity as for \MDL. 
\end{abstract}

\begin{keyword}
	  dependence logic, team semantics, independence, expressivity over finite models, computational complexity.
\end{keyword}

\end{frontmatter}

\section{Introduction}

The concept of independence is ubiquitous in many scientific disciplines such as experimental physics, social choice theory, computer science, and cryptography.  Dependence logic \D, introduced by Jouko V\"a\"an\"anen in \cite{vaananen07}, is a new logical framework in which various notions of dependence and independence can be formalized and studied. It extends first-order logic by so called dependence atoms
$$\dep[x_{1},\dots,x_{n-1},x_{n}],$$
expressing that the value of the variable $x_{n}$ depends (only) on the values of $x_{1},\dots,x_{n-1}$, in other words, that $x_{n}$ is functionally dependent of $x_{1},\dots,x_{n-1}$.
Of course, such a dependency does not make sense when talking about single assignments; therefore dependence logic formulas are evaluated for so called \emph{teams}, i.\,e., sets of assignments. A team can for example be a relational database table, a collection of plays of a game, or a set of agents with features. It is this \emph{team semantics}, together with dependence atoms, that gives dependence logic its expressive power: it is known that \D is as expressive as $\Sigma_{1}^1$, that is, the properties of finite structures that can be expressed in dependence logic are exactly the $\NP$-properties.

In a slightly later paper V\"a\"an\"anen \cite{va09} introduced dependence atoms into (propositional) modal logic. Here, teams are sets of worlds, and a dependence atom $\dep[p_{1},\dots,p_{n-1},p_{n}]$ holds in a team $T$ if there is Boolean function that determines the value of $p_{n}$ from those of $p_{1},\dots,p_{n-1}$ in all worlds in $T$. The so obtained modal dependence logic \MDL was studied from the point of view of expressivity and complexity in \cite{se09}.

In this article we introduce a novel modal variant of dependence logic called \emph{modal independence logic}, $\MIL$, extending the formulas of  modal logic $\ML$ by  so-called \emph{independence} atoms
		$$(p_1, \dots, p_\ell) \bot_{(r_1, \dots ,r_m)} (q_1, \dots, q_n),$$
the meaning of which  is that the propositional sequences $\pol{p}$ and $\pol{q}$ are independent of each other for any fixed value of  $\pol{r}$. 
Modal independence logic thus has its roots in modal dependence logic $\MDL$ \cite{va09} and first-order independence logic \cite{DBLP:journals/sLogica/GradelV13}.
In modal independence logic, dependencies between propositions can be expressed, and thus, analogously to the first-order case,  $\MDL$ can be embedded as a sublogic into $\MIL$, and 
it is easy to see that
\MIL is strictly more expressive than \MDL. 

The aim of this paper is to initiate a study of the expressiveness and the computational complexity of modal independence logic. 
For this end, we first study the computational complexity of the satisfiability and the model checking problem for $\MIL$. We show that, though \MIL is more expressive than \MDL, the complexity of these decision problems stays the same, i.\,e., the satisfiability problem is complete for nondeterministic exponential time ($\NEXP$-complete, \cite{se09}) and the model checking problem is $\NP$-complete \cite{eblo12}. In order to settle the complexity of satisfiability for \MIL, we give a translation of \MIL-formulas to existential second-order logic formulas the first-order part of which is in the  G\"odel-Kalm\'ar-Sch\"utte prefix class. Our result then follows from the classical result that the satisfiability problem for this prefix class is  $\NEXP$-complete \cite{bogrgu01}. We will also show that the same  upper bound on satisfiability can be obtained for a whole range of variants of \MIL via the notion of a generalized  (modal) dependence atom (a notion introduced in the first-order framework in \cite{ku13}).
	
		The expressive power  of $\MDL$ was first studied by Sevenster \cite{se09}, where he showed that $\MDL$ is equivalent to $\ML$ on singleton teams. In this paper we prove a general result showing that  \MIL, and in fact  any variant of it whose generalized dependence atoms are   $\FO$-definable, 
is bound to be equivalent to $\ML$ over singleton teams. Interestingly, it was recently shown in \cite{ehmmvh13}  that a so-called extended modal dependence logic $\EMDL$ is strictly more expressive than $\MDL$ even on singletons.

To demonstrate the potential applications of \MIL, we consider  
the \emph{dining cryptographers} protocol~\cite{Chaum-DINING-CRYPTOGRAPHERS-JCRYPT-1988}, a classic example for anonymous broadcast which is used as a benchmark protocol in model checking of security protocols~\cite{AlBatainehvdMeyden-ABSTRACTION-MODEL-CHECKING-DINING-CRYPTOGRAPHERS-TARK-2011}. We show how the anonymity requirement of the protocol can be formalized in modal independence logic, where---unlike in the usual approaches using epistemic logic---we do not need to use the Kripke model's accessibility relation to encode knowledge, but to express the ``possible future'' relation of branching-time models. In addition to demonstrating \MIL's expressivity, we also derive a succinctness result from our modeling of the dining cryptographers: While \MIL and \ML are equally expressive on singletons, \MIL is exponentially more succinct.

		\section{Modal Independence Logic}

		\begin{definition}
			The syntax of \emph{modal logic} \ML is inductively defined by the following grammar in extended Backus Naur form:
			$$\phi ::= p \mid \overline{p}
			\mid \phi \wedge \phi \mid \phi \vee \phi 
			\mid \Diamond\phi \mid \Box\phi.$$

			The syntax of \emph{modal dependence logic} \MDL is defined by 
			$$\phi ::= p \mid \overline{p} \mid \dep[\pol{q},p]
			\mid \phi \wedge \phi \mid \phi \vee \phi 
			\mid \Diamond\phi \mid \Box\phi,$$
			where $p$ is a propositional variable and $\pol{q}$ a sequence of propositional variables. 

			The syntax of \emph{modal independence logic} \MIL is defined by 
			$$\phi ::= p \mid \overline{p} \mid \pol{p} \bot_{\pol{r}} \pol{q}
			\mid \phi \wedge \phi \mid \phi \vee \phi 
			\mid \Diamond\phi \mid \Box\phi,$$
			where $p$ is a propositional variable and $\pol{p}, \pol{r}, \pol{q}$ are sequences of propositional variables. The sequence $\pol{r}$ may be empty.
		\end{definition}

	A Kripke structure is a tuple $\Model = (W,R, \pi)$, where $W$ is a non-empty set of worlds, $R$ is a binary relation over $W$ and $\pi\colon W \rightarrow \mathcal{P}(\Var)$ is a labeling function. 
	As usual, in a Kripke structure $\Model$ the set of all successors of $T\subseteq W$ is defined as $\succTeam{T} = \{s \in W \mid \exists s' \in T:  (s',s) \in R\}$.
Furthermore we define $\succTeamSet{T} = \{T'\subseteq \succTeam{T} \mid \forall s\in T ~ \exists s' \in T' : (s,s') \in R\}$, the set of legal successor teams.

	\begin{definition}\label{def:agree}
		Let $\pol{p} = (p_1, \dots, p_n)$ be a sequence of variables and $w,w'$ be worlds of a Kripke model $\Model= (W,R, \pi)$. Then $w$ and $w'$ are equivalent under $\pi$ over $\pol{p}$, denoted by $w \equiv_{\pi, \pol{p}} w'$, if the following holds:
		$$\pi(w) \cap \{p_1, \dots, p_n\} = \pi(w') \cap \{p_1, \dots, p_n\}.$$
	\end{definition}

	\begin{definition} {(Semantics of \ML, \MDL, and \MIL)}
		Let $\Model= (W,R, \pi)$ be a Kripke structure, $T$ be a team over $\Model$ and $\phi$ be a $\MIL$-formula. The semantic evaluation (denoted as $\Model,T \models \phi$) is defined inductively as follows.
		$$\begin{array}{l@{\quad}l@{\quad}l}
			\Model, T \models p & \Leftrightarrow & \forall w \in T\colon p \in \pi(w)\\
			\Model, T \models \overline p & \Leftrightarrow & \forall w \in T\colon p \not\in \pi(w)\\
			\Model, T \models \phi_1 \wedge \phi_2 & \Leftrightarrow & \Model, T \models \phi_1 \text{ and } \Model, T \models \phi_2\\
			\Model, T \models \phi_1 \vee \phi_2 & \Leftrightarrow & \exists T_1,T_2\colon T_1 \cup T_2 = T, \Model, T_1 \models \phi_1 \text{ and } \Model, T_2 \models \phi_2\\
			\Model, T \models \Diamond \phi & \Leftrightarrow & \exists T' \in\succTeamSet{T}\colon \Model, T' \models \phi\\
			\Model, T \models \Box \phi & \Leftrightarrow & \Model, \succTeam{T} \models \phi\\
			\Model, T \models \dep[\pol{q},p] & \Leftrightarrow & \forall w,w' \in T \colon w \equiv_{\pi,\pol{q}} w' \text { implies } w \equiv_{\pi, p} w'\\
			\Model, T \models \pol{p}_1  \bot_{\pol{q}}\ \pol{p}_2 & \Leftrightarrow & \forall w,w' \in T \colon w \equiv_{\pi,\pol{q}} w' \text { implies } \exists w'' \in T \colon \\
			&& w'' \equiv_{\pi, \pol{p}_1} w \text{ and } w'' \equiv_{\pi, \pol{p}_2} w' \text{ and } w'' \equiv_{\pi, \pol{q}} w
		\end{array}$$
	\end{definition}	
	
Note that for modal logic formulas $\phi$ we have $\Model,\{w\}\models\phi$ iff $\Model,w\models\phi$ (where in the latter case, $\models$ is defined as in any textbook for usual modal logic). In fact it is easy to see that without dependence or independence atom, our logic has the so called \emph{flatness property}, stating that team semantics and usual semantics essentially do not make a difference: 

\begin{lemma}
For every \ML-formula $\phi$ and all models $M$ and teams $T$, $M,T\models \phi$  iff  $M,w \models \phi$ for all $w\in T$. 
\end{lemma}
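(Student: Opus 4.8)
The plan is to prove, by structural induction on $\phi$, the equivalent statement that $M,T\models\phi$ if and only if $M,\{w\}\models\phi$ for every $w\in T$; the lemma then follows from the preceding observation that $M,\{w\}\models\phi$ and $M,w\models\phi$ coincide for \ML-formulas. The base cases $\phi=p$ and $\phi=\overline p$ are immediate, since the literal clauses of the semantics already quantify over all worlds of the team, so $M,T\models p$ unfolds directly to $\forall w\in T\colon M,\{w\}\models p$. The conjunction case is equally routine: $M,T\models\phi_1\wedge\phi_2$ splits into $M,T\models\phi_1$ and $M,T\models\phi_2$, each of which the induction hypothesis turns into a pointwise condition, and the two pointwise conditions recombine into $M,\{w\}\models\phi_1\wedge\phi_2$.

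The disjunction case is the first place that needs genuine care. For the forward direction I would take a witnessing split $T=T_1\cup T_2$ with $M,T_1\models\phi_1$ and $M,T_2\models\phi_2$; by the induction hypothesis every $w\in T$ lies in some $T_i$ and hence satisfies the corresponding disjunct pointwise, so $M,\{w\}\models\phi_1\vee\phi_2$ (splitting the singleton as $\{w\}\cup\emptyset$). For the converse I would define $T_1=\{w\in T\mid M,\{w\}\models\phi_1\}$ and $T_2=\{w\in T\mid M,\{w\}\models\phi_2\}$; the pointwise assumption guarantees $T=T_1\cup T_2$, and the induction hypothesis then yields $M,T_i\models\phi_i$. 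Here one uses that the empty team satisfies every formula, so that a singleton split may assign the empty part to the unsatisfied disjunct.

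The modal cases rest on the identity $\succTeam{T}=\bigcup_{w\in T}\succTeam{\{w\}}$ and on building legal successor teams componentwise. For $\Box\psi$ the argument is short: $M,T\models\Box\psi$ means $M,\succTeam{T}\models\psi$, and since every successor of a world of $T$ is a successor of $T$, the induction hypothesis transfers satisfaction of $\psi$ back and forth between $\succTeam{T}$ and the individual sets $\succTeam{\{w\}}$. The $\Diamond$ case is the main obstacle, because its semantics quantifies over whole successor teams $T'\in\succTeamSet{T}$ rather than over single successors. For the forward direction, a witness $T'\in\succTeamSet{T}$ supplies, for each $w\in T$, a successor $s'\in T'$, and the singleton $\{s'\}$ is a legal successor team of $\{w\}$ on which $\psi$ holds by the induction hypothesis. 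For the converse I would collect, for each $w$, a witnessing team $S_w\in\succTeamSet{\{w\}}$ with $M,S_w\models\psi$ and set $T'=\bigcup_{w\in T}S_w$; one then checks that $T'\in\succTeamSet{T}$, using that each $S_w$ is nonempty and contained in $\succTeam{\{w\}}\subseteq\succTeam{T}$, and that $M,T'\models\psi$ follows from the induction hypothesis since all elements of $T'$ satisfy $\psi$ pointwise. The only delicate point throughout is to respect the legal-successor-team condition when merging or restricting these $\Diamond$-witnesses.
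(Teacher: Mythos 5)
Your proof is correct and complete: the paper itself states this flatness lemma without proof (dismissing it as ``easy to see''), and your structural induction is exactly the standard argument one would supply, with the two genuinely non-trivial cases ($\vee$ and $\Diamond$) handled properly --- in particular the observation that the empty team satisfies every \ML-formula (needed for the singleton split $\{w\}\cup\emptyset$) and the careful verification that the merged witness $T'=\bigcup_{w\in T}S_w$ really lies in $\succTeamSet{T}$.
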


Team semantics and independence atoms together will lead to a richer expressive power, as we will prove in Sect.~\ref{sect:expressiveness}. However, we will also show that over teams $T$ consisting of one world only, \ML and \MIL have the same expressive power.
		
\begin{definition}
Formulas $\varphi$ and $\varphi'$ are \emph{equivalent on singletons}, if for every model $M$ and every $w\in M$, we have  
$M,\set{w}\models\varphi$ if and only if $M,\{w\}\models\varphi'$.
\end{definition}

	\section{Complexity Results}\label{sect:complexity}

	In this section we will study the computational complexity of the model checking and the satisfiabilty problem for $\MIL$. In~\cite{DBLP:journals/sLogica/GradelV13}, it was observed that in first-order logic, $\dep[\pol{p},\pol{q}]$ is equivalent to $\pol{q}\ \bot_{\pol{p}}\pol{q}$. This observation clearly carries over to \MIL, and hence in particular shows that \MIL is a generalization of \MDL.

	\begin{lemma}\label{lemma:dep_simulation}
	 Let $\Model$ be a model and $T$ a team over $\Model$, let $\pol{p}$ and $\pol{q}$ be sets of variables. Then $\Model,T\models\dep[\pol{p},\pol{q}]$ if and only if $\Model,T\models\pol{q}\ \bot_{\pol{p}}\pol{q}$.
	\end{lemma}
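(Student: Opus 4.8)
The plan is to unfold the two relevant semantic clauses and verify the biconditional one direction at a time, the only nontrivial ingredient being that each relation $\equiv_{\pi,\pol{s}}$ is an equivalence relation. Indeed, by Definition~\ref{def:agree} the relation $w\equiv_{\pi,\pol{s}}w'$ holds iff $w$ and $w'$ carry the same $\pi$-labels among the variables of $\pol{s}$, so it is reflexive, symmetric, and transitive. Instantiating the independence clause with $\pol{p}_1=\pol{p}_2=\pol{q}$ and conditioning sequence $\pol{p}$, the atom $\pol{q}\ \bot_{\pol{p}}\pol{q}$ asserts: for all $w,w'\in T$ with $w\equiv_{\pi,\pol{p}}w'$ there is a $w''\in T$ such that $w''\equiv_{\pi,\pol{q}}w$, $w''\equiv_{\pi,\pol{q}}w'$, and $w''\equiv_{\pi,\pol{p}}w$. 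The dependence atom $\dep[\pol{p},\pol{q}]$ asserts that $w\equiv_{\pi,\pol{p}}w'$ implies $w\equiv_{\pi,\pol{q}}w'$.

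For the implication from $\dep[\pol{p},\pol{q}]$ to $\pol{q}\ \bot_{\pol{p}}\pol{q}$, I would take any $w,w'\in T$ with $w\equiv_{\pi,\pol{p}}w'$ and use the witness $w''=w$, which lies in $T$ since $w\in T$. Then $w''\equiv_{\pi,\pol{q}}w$ and $w''\equiv_{\pi,\pol{p}}w$ hold by reflexivity, while the remaining required conjunct $w''\equiv_{\pi,\pol{q}}w'$, i.e.\ $w\equiv_{\pi,\pol{q}}w'$, is exactly the conclusion of the dependence hypothesis applied to the pair $w,w'$.

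For the converse, I would again fix $w,w'\in T$ with $w\equiv_{\pi,\pol{p}}w'$ and invoke the independence atom to obtain a witness $w''\in T$ with $w''\equiv_{\pi,\pol{q}}w$ and $w''\equiv_{\pi,\pol{q}}w'$. Symmetry and transitivity of $\equiv_{\pi,\pol{q}}$ then combine these into $w\equiv_{\pi,\pol{q}}w'$, which is precisely what $\dep[\pol{p},\pol{q}]$ demands; note that the third witness property $w''\equiv_{\pi,\pol{p}}w$ is not needed in this direction.

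I do not anticipate any genuine obstacle here, since the argument is a direct unfolding of the definitions. The only points deserving care are the correct substitution of the two copies of $\pol{q}$ and of $\pol{p}$ into the independence clause, the verification that the chosen witness $w''=w$ indeed belongs to $T$, and the appeal to transitivity of $\equiv_{\pi,\pol{q}}$ in the converse; all three are routine.
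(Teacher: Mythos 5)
Your proof is correct and is exactly the direct semantic unfolding that the paper leaves implicit: the authors give no proof of this lemma, merely citing the analogous first-order observation of Gr\"adel and V\"a\"an\"anen and asserting that it ``clearly carries over'' to \MIL. Your choice of witness $w''=w$ in the forward direction and the use of symmetry and transitivity of $\equiv_{\pi,\pol{q}}$ in the converse are precisely the intended routine verification.
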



We now define the two decision problems whose complexity we wish to study, namely the model checking and the satisfiability problem for modal independence logic.
\decisionproblem{$\MILSAT$}
	{\MIL formula $\phi$}
	{Does there exists a Kripke model $\Model$ and a team $T$ with $\Model, T \models \phi$?}
\decisionproblem{$\MILMC$}
	{Kripke model $\Model$, team $T$ and \MIL formula $\phi$}
	{$\Model, T \models \phi$?}
	
The corresponding problems for modal dependence logic are denoted by $\MDLSAT$ and $\MDLMC$. 

It is easy to see that model checking for \MIL is not more difficult than model checking for \MDL, namely \NP-complete. 
	
	\begin{theorem}\label{theorem:milmc np complete}
		$\MILMC$ is $\NP$-complete.
	\end{theorem}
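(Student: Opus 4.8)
The plan is to prove the two directions separately: $\NP$-hardness by reduction from modal dependence logic, and membership in $\NP$ by a guess-and-verify algorithm that follows the inductive semantics.

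For hardness I would reduce from $\MDLMC$, which is $\NP$-complete~\cite{eblo12}. By Lemma~\ref{lemma:dep_simulation}, every dependence atom $\dep[\pol{p},\pol{q}]$ is equivalent, over every model and team, to the independence atom $\pol{q}\ \bot_{\pol{p}}\pol{q}$. Replacing each dependence atom in an \MDL-formula $\phi$ by this independence atom yields an \MIL-formula $\phi'$ of linear size; since the semantic clauses are defined inductively and the two atoms are interchangeable on every team, $\Model,T\models\phi$ iff $\Model,T\models\phi'$ for all $\Model,T$. Thus $(\Model,T,\phi)\mapsto(\Model,T,\phi')$ is a polynomial-time many-one reduction, establishing $\NP$-hardness. (In fact, one could also extract hardness directly from the splitting disjunction, but reducing from $\MDLMC$ is cleanest.)

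For the upper bound I would observe that the evaluation of $\Model,T\models\phi$ unfolds along the syntax tree of $\phi$, assigning to each node exactly one team that is a subset of $W$: the root carries $T$; a conjunction passes its team unchanged to both children; a box passes $\succTeam{\cdot}$ to its child deterministically; a disjunction must split its team into $T_1\cup T_2$; and a diamond must pick a legal successor team $T'\in\succTeamSet{T}$. The only nondeterministic choices are the splits and the successor teams. The algorithm guesses all of them at once, i.e.\ one subset of $W$ per syntax-tree node, and then verifies in deterministic polynomial time that every split unions to its parent team, every diamond-team lies in the $\succTeamSet{\cdot}$ of its parent (a subset of the successors that moreover covers every world of the parent), every box-node carries $\succTeam{\cdot}$ of its parent, and every literal, dependence atom and independence atom holds on the team of its leaf. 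Each check is polynomial; in particular an independence atom $\pol{p}_1\bot_{\pol{q}}\pol{p}_2$ is verified by testing, for every pair $w,w'$ in the relevant team with $w\equiv_{\pi,\pol{q}}w'$, whether a witness $w''$ exists, at cost $O(\card{T}^3)$.

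Since the syntax tree has $O(\card{\phi})$ nodes and each stored team needs $O(\card{W})$ bits, the certificate has size $O(\card{\phi}\cdot\card{W})$ and verification is polynomial, so $\MILMC\in\NP$. The step needing the most care, and the only place the argument could break, is the claim that a single team per syntax-tree node suffices: this rests on the fact that both the splitting disjunction and the existential diamond branch into teams that are again mere subsets of $W$ and never enlarge the world set, so the entire evaluation is captured by one polynomial-size labelling of the tree rather than an exponential search. Combining the two directions yields that $\MILMC$ is $\NP$-complete.
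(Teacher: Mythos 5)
Your proposal is correct and follows essentially the same route as the paper: $\NP$-hardness via Lemma~\ref{lemma:dep_simulation} and the known $\NP$-completeness of $\MDLMC$, and membership via the natural nondeterministic extension of the modal model checking algorithm (the paper phrases the guessing recursively as Algorithm~\ref{alg:milmc}, whereas you phrase it as a single polynomial-size certificate labelling the syntax tree, which is an equivalent formulation).
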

	
\begin{proof}
%
	The lower bound follows immediately from Lemma~\ref{lemma:dep_simulation} and \NP-completeness of $\MDLMC$ \cite{eblo12}. The upper bound follows from a simple extension of the well-known model checking algorithm for modal logic, see Algorithm~\ref{alg:milmc}.

	\begin{algorithm} 
			\caption{$\NP$ algorithm for $\MILMC$}
			\label{alg:milmc}
		\begin{algorithmic}[1]
			\Function{milmc}{$\Model, T, \phi$} 
			\If{$\phi = \Box \psi$}
				\State \textbf{return} \Call{milmc}{$M,\psi,\succTeam{T}$}
			\EndIf
			\If{$\phi = \Diamond \psi$}
				\State \textbf{existentially guess} $T' \in \succTeamSet{T}$
				\State \textbf{return} \Call{milmc}{$M,\psi,T'$}
			\ElsIf{$\phi = \psi_1 \wedge \psi_2$}
				\State \textbf{return} \Call{milmc}{$M,\psi_1,T$} \textbf{ and } \Call{milmc}{$M,\psi_2,T$}
			\ElsIf{$\phi = \psi_1 \vee \psi_2$}
				\State \textbf{existentially guess} $T_1 \cup T_2 = T$
				\State \textbf{return} \Call{milmc}{$M,\psi_1,T_1$} \textbf{ and } \Call{milmc}{$M,\psi_2,T_2$}
			\ElsIf{$\phi = p$}
				\For{$s \in T$}
					\If{$p \not\in \pi(s)$}
						\State \textbf{return} \textbf{false}	
					\EndIf
				\EndFor	
				\State \textbf{return} \textbf{true}
			\ElsIf{$\phi = \overline{p}$}
				\For{$s \in T$}
					\If{$p \in \pi(s)$}
						\State \textbf{return} \textbf{false}	
					\EndIf
				\EndFor	
				\State \textbf{return} \textbf{true}
			\ElsIf{$\phi = \pol{p}\ \bot_{\pol{r}}\ \pol{q}$}
				\For{$s \in T$}
					\For{$s' \in T$}
						\If{$\pi(s') \cap \pol{r} = \pi(s'') \cap \pol{r}$}
							\State found $\leftarrow$ \textbf{false}
							\For{$s'' \in T$} 
								\State agreeP $\leftarrow \pi(s'') \cap \pol{p} = \pi(s) \cap \pol{p}$ 
						
								\State agreeQ $\leftarrow \pi(s'') \cap \pol{q} = \pi(s') \cap \pol{q}$ 
								
								\State agreeR $\leftarrow \pi(s'') \cap \pol{r} = \pi(s) \cap \pol{r}$ 

								\If{\textnormal{agreeP} \textbf{ and } \textnormal{agreeQ} \textbf{ and } \textnormal{agreeR}}
									\State found $\leftarrow$ \textbf{true}
								\EndIf
							\EndFor
							\If{$\textbf{not }\textnormal{found}$}
								\State \textbf{return } \textbf{false}
							\EndIf
						\EndIf
					\EndFor
				\EndFor
				\State \textbf{return} \textbf{true}
			\EndIf
			\EndFunction
		\end{algorithmic}
	\end{algorithm}
\end{proof}

Next we will consider the complexity of the satisfiability problem $\MILSAT$ for	modal independence logic. From Lemma~\ref{lemma:dep_simulation} and the hardness of $\MDLSAT$ for nondeterministic exponential time~\cite{se09} we immediately obtain the following lower bound:

	\begin{lemma}
		$\MILSAT$ is $\NEXP$-hard.
	\end{lemma}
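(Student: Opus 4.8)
The plan is to reduce $\MDLSAT$ to $\MILSAT$ and then invoke the $\NEXP$-hardness of $\MDLSAT$ established in \cite{se09}. First I would define a translation $t$ sending each \MDL-formula to an \MIL-formula by replacing every dependence atom $\dep[\pol{q},p]$ with the independence atom $p\ \bot_{\pol{q}}\ p$, and leaving all other constructs — the literals $p,\overline{p}$, the connectives $\wedge,\vee$, and the modalities $\Diamond,\Box$ — syntactically unchanged. This map is clearly computable in linear time, so it constitutes a polynomial-time many-one reduction; the only thing that then needs justification is that it preserves satisfiability.

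For correctness I would prove by structural induction that $\Model,T\models\phi$ if and only if $\Model,T\models t(\phi)$ for every model $\Model$ and every team $T$. The only atom touched by $t$ is the dependence atom, and for it the desired equivalence is exactly the content of Lemma~\ref{lemma:dep_simulation} (instantiating its $\pol{p}$ by $\pol{q}$ and its $\pol{q}$ by the singleton $p$), which yields $\Model,T\models\dep[\pol{q},p]$ iff $\Model,T\models p\ \bot_{\pol{q}}\ p$. Every remaining case is immediate: the semantic clauses for $p,\overline{p},\wedge,\vee,\Diamond,\Box$ are literally identical in \MDL and \MIL, so replacing immediate subformulas by their $t$-translations preserves truth under each connective by the induction hypothesis. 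In particular $\phi$ is satisfiable as an \MDL-formula iff $t(\phi)$ is satisfiable as an \MIL-formula.

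Combining the two steps, $t$ is a linear-time reduction from $\MDLSAT$ to $\MILSAT$, and since $\MDLSAT$ is $\NEXP$-hard \cite{se09}, so is $\MILSAT$. I do not expect any genuine obstacle here: the claim is an immediate consequence of the semantic equivalence in Lemma~\ref{lemma:dep_simulation}. The only point requiring a (routine) check is that the shared connectives really do have identical semantics in the two logics, so that the structural induction passes through the splitting of $\vee$ and through the modalities $\Diamond,\Box$ without alteration; this is evident from the common definition of the semantics given earlier.
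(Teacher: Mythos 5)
Your proposal is correct and matches the paper's argument exactly: the paper also obtains the lower bound immediately from Lemma~\ref{lemma:dep_simulation} (rewriting each $\dep[\pol{q},p]$ as $p\ \bot_{\pol{q}}\ p$) together with the $\NEXP$-hardness of $\MDLSAT$ from \cite{se09}. The routine structural induction you spell out is left implicit in the paper but is the same reduction.
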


In order to  show containment in  $\NEXP$, we need to recall the following classical result. Recall that the so-called G\"odel-Kalm\'ar-Sch\"utte  prefix class $[\exists ^*\forall^2\exists ^*,all]$ contains sentences of FO, in a relational vocabulary without equality,  which are in prenex normal form and have a quantifier prefix of the form $\exists ^*\forall^2\exists ^*$.

\begin{proposition}[\cite{bogrgu01}] \label{prefix}
Satisfiability of formulas in prefix class  $[\exists ^*\forall^2\exists ^*,all]$ can be decided in $\mathrm{NTIME}(2^{O(n/\log n)})$.
\end{proposition}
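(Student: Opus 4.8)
The plan is to establish the result by a bounded-model property together with a nondeterministic guess-and-check, which is the standard route for decidable prefix classes. First I would put the input into a convenient normal form: given a sentence $\phi = \exists \pol{u}\, \forall x \forall y\, \exists \pol{z}\, \psi$ with $\psi$ quantifier-free over a relational, equality-free vocabulary, I would replace the leading block $\exists \pol{u}$ by fresh constants, reducing satisfiability of $\phi$ to that of a $\forall^2\exists^*$ sentence $\phi' = \forall x \forall y\, \exists \pol{z}\, \psi(\pol{c}, x, y, \pol{z})$. The decisive structural feature is that only \emph{two} universal quantifiers occur: the truth of $\psi$ at a pair depends only on the atomic type of $(x,y)$ together with the finitely many witnesses demanded for $\pol{z}$, so I never have to control types of tuples longer than a fixed bound, and this is exactly what keeps the class decidable (three universal quantifiers already yield undecidability).

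The core of the argument is a bounded-model theorem: if $\phi'$ is satisfiable it has a model $\Model$ of size $2^{O(n/\log n)}$. Starting from an arbitrary model, I would build a small one by a type-based saturation. One keeps a domain $D$ that initially contains the constants $\pol{c}$, and for every pair $(a,b)\in D^2$ one must supply witnesses $\pol{w}$ realizing $\psi(\pol{c},a,b,\pol{w})$; rather than importing a fresh witness tuple from the big model for each pair, one reuses an already-present element of the appropriate atomic type whenever one exists, closing $D$ under this witnessing operation. Since $\psi$ is quantifier-free and the vocabulary has no equality, substituting a type-equivalent element for a witness preserves the truth of $\psi$, so the resulting finite substructure still satisfies $\phi'$.

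To pass from a naive single-exponential bound to the sharp exponent $2^{O(n/\log n)}$ I would invoke the encoding-length observation: a formula of length $n$ can mention only $m = O(n/\log n)$ distinct non-logical symbols, because naming $m$ symbols already costs $\Omega(m\log m)$ characters. This forces the number of atomic types that the saturation ever needs to distinguish down to $2^{O(n/\log n)}$, which in turn caps $|D|$. For the algorithm I would then have a nondeterministic machine guess a structure of size $N = 2^{O(n/\log n)}$ \emph{together with}, for each pair $(a,b)$ of its elements, a choice of witnesses for the block $\exists\pol{z}$; it then checks, for every pair, that the quantifier-free matrix $\psi$ holds under the guessed witnesses. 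Guessing this witness table costs $N^2$ times a polynomial factor and each check is polynomial in $|\psi|$, so the whole procedure runs in $\mathrm{NTIME}(2^{O(n/\log n)})$. Guessing Skolem witnesses rather than brute-forcing the $\exists\pol{z}$ block is what keeps verification within budget, since evaluating $\exists\pol{z}\,\psi$ directly would cost $N^{|\pol{z}|}$ and overshoot.

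The hard part will be the bounded-model theorem with the \emph{correct} exponent. The naive saturation threatens to blow $D$ up by a fresh witness tuple for every universal pair and thus to grow without bound across rounds; making the reuse argument go through — choosing the right notion of atomic type so that reusing a representative genuinely preserves satisfaction of $\phi'$, and arguing that the closure terminates with only $2^{O(n/\log n)}$ elements — is the delicate fixpoint-and-accounting step. In particular, matching the $n/\log n$ savings rests on showing that both the number of relevant types and the number of witnesses per type are governed by the count of distinct symbols rather than by the raw formula length, and this careful counting is precisely what the cited result of~\cite{bogrgu01} supplies.
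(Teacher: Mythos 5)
The first thing to say is that the paper contains no proof of this statement: Proposition~\ref{prefix} is imported as a black box from B\"orger, Gr\"adel and Gurevich~\cite{bogrgu01}, so there is no in-paper argument to compare yours against, and you were in effect reconstructing a textbook theorem. On its own terms, your sketch identifies the three correct ingredients of the standard route: a bounded-model property, a nondeterministic guess-and-check of a small model together with Skolem witnesses for the $\exists^*$ block, and the observation that a length-$n$ formula can name only $O(n/\log n)$ distinct non-logical symbols, which is where the sharp exponent comes from.

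There is, however, a genuine gap exactly where you flag one, and it is worth being precise about why your saturation step fails as stated. When you ``reuse an already-present element of the appropriate atomic type'' as a witness for the pair $(a,b)$, you must also commit that element to a \emph{binary} atomic type jointly with $a$ and with $b$; an element already serving as a witness for other pairs has already fixed 2-types with other elements, and its 1-type alone does not guarantee that the new 2-types can be assigned consistently. This conflict is precisely what makes the G\"odel class nontrivial, and the classical construction avoids it not by greedy reuse but by arranging the domain in several disjoint layers (G\"odel's ``three copies'' device), always drawing the witnesses for a pair from a layer disjoint from the one the pair lives in, so that no element is ever forced into two incompatible 2-types. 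A second, smaller slip: nondeterministically guessing ``a structure of size $N$'' means guessing relation tables of size $N^r$ for an arity-$r$ symbol, and since $r$ may grow with $n$ this overshoots $2^{O(n/\log n)}$; the verifier must instead guess only the truth values of the atoms of $\psi$ at the tuples it actually inspects and check these for mutual consistency. Since you explicitly defer both the layered construction and the sharp type-counting to~\cite{bogrgu01}, your text is an accurate road map rather than a complete proof --- which puts it on the same footing as the paper itself, but does not close the argument.
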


Next we will show that $\MILSAT\in \NEXP$ with the help of Proposition~\ref{prefix}. We will first define a variant of the standard translation of ML into FO in the case of $\MIL$.  For a Kripke structure $(W,R,\pi)$, and a team $T \subseteq W$, we denote by $(W, \{A_i\}_i, R, T)$  the first-order structure of vocabulary $\{R,T \}\cup \{A_i\}_{i\in \N}$ encoding  $(W,R,\pi)$ in the obvious way.

\begin{lemma}\label{lemmaeso} For any formula $\phi\in \MIL$ there is a sentence $\phi^*$ of existential (monadic) second-order logic of the form
\begin{equation}\label{translation}
 \exists Y_1\ldots Y_m\forall xy\exists z_1\ldots z_k\theta  ,
 \end{equation}
where $\theta$ is quantifier-free, and  such that for all  $(W,R,\pi)$ and $T$ it holds 
 \[ (W,R,\pi),T\models \phi \Leftrightarrow   (W, \{A_i\}_{i\in \N}, R, T)  \models \phi^*.  \]
\end{lemma}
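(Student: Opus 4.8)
The plan is to build $\phi^*$ by a structural pass over $\phi$ that (i) records, in fresh monadic second-order variables, the team in which each subformula occurrence is to be evaluated, and (ii) collects a conjunction of \emph{local} first-order conditions asserting that these teams obey the semantic clauses and that the atoms hold. The decisive design choice is to introduce a set variable even for the teams produced by $\Box$ and $\Diamond$, so that membership in the ``current team'' is always an atomic formula $Y_\psi(x)$ rather than a formula containing nested existentials; this is exactly what keeps the first-order quantifier prefix under control.

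Concretely, I would attach to every subformula occurrence $\psi$ of $\phi$ a fresh unary predicate $Y_\psi$ (using $T$ itself for the root), together with a local condition $C_\psi$:
\begin{itemize}
\item $\psi=\psi_1\wedge\psi_2$: reuse $Y_\psi$ for both children, no condition;
\item $\psi=\psi_1\vee\psi_2$: $\forall x\,\bigl(Y_\psi(x)\leftrightarrow Y_{\psi_1}(x)\vee Y_{\psi_2}(x)\bigr)$;
\item $\psi=\Diamond\psi'$: assert $Y_{\psi'}\in\succTeamSet{Y_\psi}$, i.e.\ $\forall x\,\bigl(Y_{\psi'}(x)\to\exists u\,(Y_\psi(u)\wedge R(u,x))\bigr)$ and $\forall x\,\bigl(Y_\psi(x)\to\exists u\,(Y_{\psi'}(u)\wedge R(x,u))\bigr)$;
\item $\psi=\Box\psi'$: assert $Y_{\psi'}=\succTeam{Y_\psi}$, i.e.\ $\forall x\,\bigl(Y_{\psi'}(x)\leftrightarrow\exists u\,(Y_\psi(u)\wedge R(u,x))\bigr)$;
\item the atoms translate their semantic clauses directly, e.g.\ $\pol{p}_1\bot_{\pol{q}}\pol{p}_2$ becomes $\forall xy\,\bigl[Y_\psi(x)\wedge Y_\psi(y)\wedge(x\equiv_{\pol q}y)\to\exists z\,(Y_\psi(z)\wedge x\equiv_{\pol{p}_1}z\wedge y\equiv_{\pol{p}_2}z\wedge x\equiv_{\pol q}z)\bigr]$, where $x\equiv_{\pol p}y$ abbreviates the quantifier-free $\bigwedge_{p_i\in\pol p}(A_{p_i}(x)\leftrightarrow A_{p_i}(y))$.
\end{itemize}

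Each $C_\psi$, after the standard rewrites $A\to\exists u\,B\equiv\exists u\,(A\to B)$ and $(\exists u\,B)\to A\equiv\forall u\,(B\to A)$ (with $u$ not free in $A$), is equivalent to a formula of shape $\forall^{\le 2}\exists^{\le 1}$ with quantifier-free matrix; the $\Box$-equality is the only condition that needs a universal pair and an existential together, everything else being $\forall^1$ or $\forall^1\exists^1$. I then pad every $C_\psi$ to exactly two universal variables, rename so that all conjuncts share the same universal pair $x,y$ but use pairwise disjoint existential variables, and merge via $\bigwedge_i\forall xy\,\gamma_i\equiv\forall xy\bigwedge_i\gamma_i$ followed by $\bigwedge_i\exists\vec z_i\,\theta_i\equiv\exists\vec z_1\cdots\vec z_n\bigwedge_i\theta_i$ (valid since $W\neq\emptyset$). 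Prefixing the resulting $\forall xy\,\exists\vec z\,\theta$ with $\exists(Y_\psi)_\psi$ gives precisely form~(\ref{translation}), with $m,k=O(\card\phi)$. Correctness, $(W,R,\pi),T\models\phi\iff(W,\{A_i\}_i,R,T)\models\phi^*$, is then a routine induction: a witnessing family of teams for the team-semantics evaluation of $\phi$ yields an interpretation of the $Y_\psi$ satisfying all $C_\psi$, and conversely any satisfying interpretation reconstructs such a family.

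The main obstacle is the $\forall^2$ bound, i.e.\ landing the first-order part in the G\"odel--Kalm\'ar--Sch\"utte class with only two universal quantifiers. The naive standard translation nests one $\forall$ per modal operator, and unfolding $\succTeam{T}$ inline turns team-membership into an existential that, sitting in the antecedent of an atom's guard, would flip to a universal and blow the prefix up. The two ideas that defeat this are (a)~reifying every intermediate team as its own monadic variable, so membership remains atomic, and (b)~the observation that a conjunction of $\forall^2\exists^*$ formulas is again $\forall^2\exists^*$. I would verify that the empty-team corner cases (e.g.\ $\succTeam{Y_\psi}=\emptyset$) survive, but the universal guards $Y_\psi(x)$ render them vacuous.
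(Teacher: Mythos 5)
Your construction is essentially the paper's own proof: the paper likewise reifies the team of each subformula as a fresh monadic second-order variable (introduced at the $\vee$, $\Box$, and $\Diamond$ clauses), renders each semantic clause as a $\forall^{\le 2}\exists^{*}$ first-order condition over these predicates, and then obtains the prefix form by renaming bound variables and merging the shared universal pair with pairwise disjoint existential blocks. The only differences are presentational --- the paper builds the translation recursively as $\phi'$ and then prenexes into $\phi^{*}$, whereas you collect a flat conjunction of local conditions $C_\psi$ --- and your $\Diamond$-clause is in fact slightly more explicit in also asserting $Y_{\psi'}\subseteq \succTeam{Y_\psi}$.
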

\begin{proof}
We first define an auxiliary translation  $\phi\mapsto \phi'$ for which correctness is obvious and then indicate how to go from $\phi'$ to $\phi^*$.
\begin{enumerate}
	\item Suppose $\phi$ is $p_i$. Then $\phi'$ is defined as 
	\[  \phi':=    \forall x (T(x) \rightarrow A_{p_i}(x)).      \] 
	
	\item Suppose $\phi$ is $\overline p_i$. Then $\phi'$ is defined as 
	\[  \phi':=    \forall x (T(x) \rightarrow \neg A_{p_i}(x)).      \] 
	
	\item\label{vee} Suppose $\phi$ is $\psi_1 \vee \psi_2$. Then $\phi'$ is defined as 
	\[  \phi':=  \exists Y_1Y_2(\psi'_1(T/Y_1) \wedge \psi'_2(T/Y_2)) \wedge \forall x (T(x) \rightarrow (Y_1(x) \vee Y_2(x))) \] 
	\item Suppose $\phi$ is $\psi_1 \wedge \psi_2$. Then $\phi'$ is defined as 
	\[  \phi':=  \psi'_1 \wedge \psi'_2. \] 
\item Suppose $\phi$ is $\Box\psi$. Then $\phi'$  is defined as 
	\[  \phi':=  \exists Y (\psi'(T/Y) \wedge \forall x  \forall y ((T(x) \wedge E(x,y)) \leftrightarrow  Y(y)))
\] 
	\item Suppose $\phi$ is $\Diamond\psi$. Then $\phi'$  is defined as 
	\[  \phi':=  \exists Y (\psi'(T/Y) \wedge \forall x (T(x) \rightarrow \exists z (Y(z) \wedge E(x,z))) )\]
	
	\item Suppose $\phi$ is $\pol{p}_1  \bot_{\pol{p}_2} \pol{p}_3$. Then $\phi'$  is defined as 
	\begin{eqnarray*}
   \phi':=\forall x \forall y ((T(x) \wedge T(y) \wedge EQ_{\pol{p}_2}(x,y))\rightarrow  && \\
	\exists z(T(z) \wedge  EQ_{\pol{p}_2}(x,z) \wedge && EQ_{\pol{p}_1}(x,z) \wedge     
EQ_{\pol{p}_3}(y,z))),
\end{eqnarray*}
where $EQ_{\pol{p}_i}(v,w)$ is a shorthand for the formula
\[\bigwedge_{p \in \pol{p}_i}  A_{p}(v) \leftrightarrow A_{p}(w) .   \]
\end{enumerate}
It remains  to define the translation $\phi \mapsto \phi^*$. This translation is defined by modifying  the above clauses by essentially moving all quantifiers to the left of the formula, and by possibly renaming some of the bound variables $Y_i$ and $z_i$.  We will indicate these modifications by considering the case of disjunction. The other cases are analogous. Assume that  $\psi^*_1 $ and  $\psi^*_2$ are defined already:
\[  \psi_i^* = \exists \bar{Y}_i\forall xy\exists \pol{z}_i \theta_i,  \]
where $\theta_i$ is  quantifier free, and  $\psi^*_i\equiv \psi'_i $. By renaming of bound variables, we may assume that 
$\bar{Y}_2=Y_3\dots Y_k$, and $\bar{Y}_1=Y_{k+1}\dots Y_m$, and that $\pol{z}_1$ and $\pol{z}_2$ do not have any common variables either. Then  $(\psi_1\vee \psi_2)^*$  is defined by replacing $\psi'_i $ by $\psi^*_i $ in the definition of $(\psi_1\vee \psi_2)'$ (see clause  \ref{vee}), and by extending the scopes of the quantifiers:
\begin{eqnarray*}
 (\psi_1\vee \psi_2)^*:= \exists Y_1\ldots Y_m \forall xy\exists \pol{z}_2 \pol{z}_1 ( (\theta_1(T/Y_1) \wedge && \theta_2(T/Y_2)) \wedge   \\
  (T(x) \rightarrow && (Y_1(x) \vee Y_2(x))).  
  \end{eqnarray*}
\end{proof}

		\begin{theorem}\label{theorem:milsat in nexp}
		$\MILSAT$ is in $\NEXP$.
	\end{theorem}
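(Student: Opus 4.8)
The plan is to combine Lemma~\ref{lemmaeso} with Proposition~\ref{prefix}. Given a \MIL-formula $\phi$, I first apply Lemma~\ref{lemmaeso} to obtain an equivalent existential second-order sentence $\phi^* = \exists Y_1\ldots Y_m \forall xy \exists z_1\ldots z_k\,\theta$, where $\theta$ is quantifier-free. The key observation is that $\phi$ is satisfiable (as a \MIL-formula) if and only if $\phi^*$ is satisfiable as a first-order sentence after we \emph{drop the second-order quantifiers}: a model for the body $\forall xy\exists z_1\ldots z_k\,\theta$ over the vocabulary $\{R, T\}\cup\{A_i\}_i \cup \{Y_1,\ldots,Y_m\}$ directly yields a Kripke structure and team satisfying $\phi$ (read off $R$, $\pi$ from the $A_i$, and $T$), and conversely. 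Thus \MILSAT reduces to first-order satisfiability of a sentence whose quantifier prefix is $\forall^2\exists^*$, which lies in the G\"odel-Kalm\'ar-Sch\"utte class $[\exists^*\forall^2\exists^*,\mathit{all}]$ (the leading $\exists^*$ block being empty here).

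There is one subtlety I would handle carefully: the monadic predicates $Y_i$, $A_i$, and $T$ are part of the relational vocabulary, and the prefix class requires a vocabulary \emph{without equality}. I would first check that the translation in Lemma~\ref{lemmaeso} introduces no equality symbol --- inspecting its clauses, the atoms are all of the form $T(x)$, $A_p(x)$, $E(x,y)$, and biconditionals between monadic atoms, so this is automatic. The second-order quantifiers $\exists Y_1\ldots Y_m$ are simply interpreted as existentially chosen monadic relations, which is exactly what passing from satisfiability of $\phi^*$ to satisfiability of its first-order body accomplishes. I would state explicitly that satisfiability of $\phi^*$ over arbitrary interpretations of $Y_1,\ldots,Y_m$ coincides with satisfiability of the first-order sentence $\forall xy\exists z_1\ldots z_k\,\theta$ in the expanded vocabulary, so no genuine second-order reasoning remains.

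Next I would bound the size of the resulting first-order sentence. The translation of Lemma~\ref{lemmaeso} is polynomial in $\card{\phi}$: each connective contributes a constant number of new quantifiers and the $EQ_{\pol{p}_i}$ shorthands expand to conjunctions of length bounded by the number of variables occurring in $\phi$. Hence the first-order sentence has length $n = \mathrm{poly}(\card{\phi})$. Applying Proposition~\ref{prefix}, its satisfiability is decidable in $\mathrm{NTIME}(2^{O(n/\log n)})$, which is contained in $\NEXP$ when measured in $\card{\phi}$. This establishes \MILSAT $\in\NEXP$.

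The main obstacle I anticipate is the bookkeeping in justifying the reduction from \MIL-satisfiability to first-order satisfiability of $\phi^*$'s body. Concretely, I must argue that an arbitrary first-order model of $\forall xy\exists z_1\ldots z_k\,\theta$ --- with \emph{no constraints} imposed on how $T$, the $A_i$, and $E$ interrelate beyond those encoded in $\theta$ --- indeed corresponds to a legitimate Kripke structure and team. This is where I would lean on the ``obvious'' encoding $(W,\{A_i\}_i,R,T)$ from the paragraph preceding Lemma~\ref{lemmaeso}: any structure over this vocabulary whose domain $W$ is nonempty reads off as a Kripke model, and the distinguished predicate $T$ picks out the team, so the equivalence of Lemma~\ref{lemmaeso} transfers satisfiability in both directions. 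Once this correspondence is stated cleanly, the rest is a direct appeal to the two cited results.
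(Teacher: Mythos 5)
Your proposal is correct and follows essentially the same route as the paper: apply Lemma~\ref{lemmaeso}, observe that satisfiability of $\phi^*$ reduces to first-order satisfiability of its body $\forall xy\exists z_1\ldots z_k\,\theta$ over the vocabulary expanded by $Y_1,\dots,Y_m$, note the polynomial size of the translation, and invoke Proposition~\ref{prefix}. Your extra checks (equality-freeness, the empty leading $\exists^*$ block, the one-to-one correspondence between structures and Kripke-model/team pairs) are all points the paper either makes or leaves implicit, so nothing is missing.
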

\begin{proof}	
Let $\phi\in \MIL$. Then $\phi$ is satisfiable if and only if $\phi^*$ is satisfiable. This follows from the previous theorem and the fact that there is a 1-1 correspondence with Kripke structures $(W,R,\pi)$, and teams $T $ 
for $\phi$ and  $\{R,T \}\cup \{A_i\}_{1 \le i\le n}$-structures $(W, \{A_i\}_{1 \le i\le n},R, T)$, where $n$ is large enough such that all $p_i$ appearing in $\phi$ satisfy $i\le n$. 

Recall  now that $\phi^*$ has the form \eqref{translation}, hence it is satisfiable if and only if the first-order sentence
\begin{equation}\label{FOpart}
\forall xy\exists z_1\ldots z_k\theta 
 \end{equation}
of vocabulary $\{Y_1,\ldots,Y_m\}\cup \{R,T \}\cup \{A_i\}_{1 \le i\le n}$ is satisfiable. The sentence \eqref{FOpart} is contained in prefix class $[\exists ^*\forall^2\exists ^*,all]$, hence  the satisfiability of it, and also  $\phi^*$, can be decided in time $\mathrm{NTIME}(2^{O(|\phi^*|)})$. The claim now follows from the fact  the mapping $\phi \mapsto \phi^*$ can be computed in time polynomial in $|\phi |$.
\end{proof}

\begin{corollary}
$\MILSAT$ is \NEXP-complete.  
\end{corollary}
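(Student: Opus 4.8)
The plan is to observe that this corollary is nothing more than the conjunction of the two bounds already established in this section, so essentially no new work is required. By definition, showing that \MILSAT is \NEXP-complete amounts to establishing two things: that \MILSAT is \NEXP-hard, and that \MILSAT lies in \NEXP. Both halves are in hand. First I would invoke the \NEXP-hardness lemma stated just above, which itself rests on Lemma~\ref{lemma:dep_simulation} (the simulation of the dependence atom $\dep[\pol{p},\pol{q}]$ by the independence atom $\pol{q}\ \bot_{\pol{p}}\pol{q}$) together with the known \NEXP-hardness of \MDLSAT from~\cite{se09}. This gives the lower bound for free, since the reduction embedding \MDL into \MIL is trivially computable.

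Second I would appeal to Theorem~\ref{theorem:milsat in nexp}, which supplies the matching upper bound $\MILSAT\in\NEXP$. Combining a \NEXP-hardness result with membership in \NEXP yields \NEXP-completeness immediately, so the corollary follows in one line.

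If I wanted to flag where the actual mathematical content sits, I would emphasize that all the substance of the corollary was discharged in proving Theorem~\ref{theorem:milsat in nexp}: the translation $\phi\mapsto\phi^*$ of Lemma~\ref{lemmaeso} into an existential second-order sentence whose first-order matrix falls into the G\"odel-Kalm\'ar-Sch\"utte prefix class $[\exists^*\forall^2\exists^*,all]$, followed by the appeal to Proposition~\ref{prefix}. The only point worth double-checking is that the hardness and containment are stated for the same problem with the same input encoding, namely an arbitrary \MIL-formula $\phi$ asking for the existence of a Kripke model and team satisfying it; since both results are phrased against this common problem definition, there is no mismatch to reconcile. Hence I do not anticipate any genuine obstacle here—the corollary is an immediate bookkeeping consequence of the preceding lemma and theorem.
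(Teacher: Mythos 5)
Your proposal is correct and matches the paper exactly: the corollary is stated without a separate proof precisely because it is the immediate combination of the \NEXP-hardness lemma (via Lemma~\ref{lemma:dep_simulation} and the hardness of \MDLSAT from~\cite{se09}) with the upper bound of Theorem~\ref{theorem:milsat in nexp}. No further work is needed.
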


It is interesting to note that Theorem~\ref{theorem:milsat in nexp} and Lemma~\ref{lemma:dep_simulation} directly imply the result of Sevenster \cite{se09} that  $\MDLSAT$ is contained in $\NEXP$. On the other hand, it seems that the original argument of Sevenster does not immediately generalize to $\MIL$.

\begin{corollary}
$\MDLSAT$ is \NEXP-complete.  
\end{corollary}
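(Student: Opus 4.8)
The plan is to obtain $\NEXP$-completeness essentially for free, reusing machinery already in place. For the lower bound I would simply invoke the $\NEXP$-hardness of $\MDLSAT$ established by Sevenster in \cite{se09}; since $\MDL$ is exactly the logic under consideration here, no new hardness argument is required.

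For the upper bound, the idea is to reduce $\MDLSAT$ to $\MILSAT$ and then apply Theorem~\ref{theorem:milsat in nexp}. Given an $\MDL$-formula $\phi$, I would produce an $\MIL$-formula $\phi'$ by replacing every dependence atom $\dep[\pol{p},\pol{q}]$ occurring in $\phi$ by the independence atom $\pol{q}\ \bot_{\pol{p}}\pol{q}$, leaving all Boolean connectives and modalities unchanged. By Lemma~\ref{lemma:dep_simulation} each such replacement is semantics-preserving at every model and team, so a routine induction on the structure of $\phi$ (the $\wedge$, $\vee$, $\Diamond$, $\Box$ and literal cases being literally identical on the two sides) yields $\Model,T\models\phi$ iff $\Model,T\models\phi'$ for all $\Model$ and $T$. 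In particular, $\phi$ is satisfiable if and only if $\phi'$ is.

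The map $\phi\mapsto\phi'$ is clearly computable in time linear in $|\phi|$ and yields a formula of size linear in $|\phi|$, so it is a polynomial-time many-one reduction from $\MDLSAT$ to $\MILSAT$. Since $\MILSAT\in\NEXP$ by Theorem~\ref{theorem:milsat in nexp} and $\NEXP$ is closed under polynomial-time many-one reductions, we conclude $\MDLSAT\in\NEXP$; combined with the hardness from \cite{se09} this gives $\NEXP$-completeness. (This is precisely the containment noted informally just before the statement.)

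I do not anticipate any genuine obstacle here. The only point meriting a line of care is that Lemma~\ref{lemma:dep_simulation} is stated for a single atom, whereas the reduction substitutes atoms inside arbitrary contexts; but because the team semantics of $\MIL$ is compositional, replacing a subformula by a semantically equivalent one preserves the meaning of the whole formula, so the induction goes through without complication.
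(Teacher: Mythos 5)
Your proposal is correct and matches the paper's intended argument exactly: the paper derives the upper bound by combining Lemma~\ref{lemma:dep_simulation} (the embedding of dependence atoms as independence atoms) with Theorem~\ref{theorem:milsat in nexp}, and takes the lower bound from Sevenster \cite{se09}, just as you do. Your added remark about substituting equivalent atoms inside arbitrary contexts being justified by compositionality is a reasonable point of care that the paper leaves implicit.
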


\section{Generalized Dependency Notions}\label{sect:generalized dependence}

$\MIL$ can be seen as an extension of modal logic with team semantics by the independence atom---let us denote such an extension by $\ML(\bot)$. Similarly, we can extend modal logic with other atoms, so-called \emph{generalized dependence atoms}, which we define now.

\begin{definition}
	Let $\Model=(W,R,\pi)$ be a Kripke model and $T=(w_1, \dots, w_m)$ be a team over $\Model$. Then for any propositional variable $p$, $T(p)$ is defined as the tuple $(s_1, \dots, s_m)$, where $s_i$, for $1 \leq i \leq m$ is defined as:
		$$s_i = 	\begin{cases}
					1	&	w_i \in \pi(p)\\
					0	&	\text{, otherwise}	
				\end{cases}.$$
\end{definition}

For a set of propositions $\pol{q} = (q_1, \dots, q_k)$, we define $T(\pol{q})$ analogously as $(T(q_1), \dots, T(q_k))$. 

Similar to Kuusisto's \cite{ku13} definition of generalized first order dependence atoms we give a definition of generalized modal dependence atoms. In the following, a set of matrices $D$ is invariant under permutations of rows, if for every matrix $M\in D$, if $M'$ is obtained from $M$ by permuting $M$'s rows, then $M'$ is an element of $D$ as well.

\begin{definition}
Let $D$ be a set of Boolean $n$-column matrices that is invariant under permutation of rows.
The semantics of the \emph{generalized dependence atom defined by $D$} is given as follows:

	Let $\Model$ be a Kripke model, $T$ be a team over $\Model$ and $p_1, \dots, p_n$ atomic propositions. Then
	$$\Model, T \models D(p_1, \dots, p_n) \;\Longleftrightarrow\; \langle T(p_1), \dots, T(p_n) \rangle \in D.$$
The \emph{width} of $D$ is defined to be $n$.

Note that for simplicity we do not distinguish in notation between the logical atom $D$ and the set $D$ of Boolean matrices. 
\end{definition}

The Boolean matrix $\langle T(p_1), \dots, T(p_n) \rangle$ contains one column for each of the variables $p_1,\dots,p_n$; each row of the matrix corresponds to one world from $T$. The entry for variable $p_i$ and world $w\in T$ is $1$ if and only if the variable $p_i$ is satisfied in the world $w$. We require that $D$ is invariant under permutation of rows in order to ensure that whether $\Model,T\models D(p_1,\dots,p_n)$ holds does not depend on the ordering of the worlds in $T$ that is used in computing the tuple $T(p)$. 

In the following we will mainly be interested in generalized dependence atom \emph{definable} by first order formulae. For this purpose let $D$ be an atom of width $n$ as above, and 
$\phi$ be a first order sentence over signature $\langle A_{1}, \dots, A_{n}\rangle$.
Then $\phi$ \emph{defines} $D$ if for all Kripke models $\Model = (W,R,\pi)$ and teams $T$ over $\Model$,
		$$\Model, T \models D(p_1, \dots, p_n) \;\Longleftrightarrow\; \mathcal{A} \models \phi,$$
	where $\mathcal{A}$ is the first order structure  with universe $T$ and relations $A_{i}^{\mathcal{A}}$ for $1 \leq i \leq n$, where for all $w\in T$, $w \in A_{i}^{\mathcal{A}} \Leftrightarrow p_i \in \pi(w)$.

We say that a generalized dependence atom $D$ is \emph{\FO-definable} if there exists a formula $\phi$ defining $D$ as above. 
Strictly speaking, the dependence atoms considered in the literature are \emph{families} of dependence atoms for different width, e.\,g., the simple dependence $\dep[p_{1},\dots,p_{n}]$ is defined for arbitrary values of $n$. Let us say that such a family is (P-uniformly) \FO-definable if there exists a family of defining first order formulae $\phi_{n}$ such that $\phi_{n}$ defines the atom of width $n$ and the mapping $1^n\mapsto\langle\phi_n\rangle$ is computable in polynomial time; that is, an encoding of formula $\phi_n$ is computable in time polynomial in $n$. Note that in particular this implies that $|\phi_{n}| = p(n)$ for some polynomial $p$.


As examples let us show how to define some well-studied generalized dependence atoms as follows.

$$\begin{array}{llll}
	\dep[\pol{p},q] & \Leftrightarrow & \forall w \forall w' & ((\bigwedge_{1 \leq i \leq n} A_{p_i}(w) \leftrightarrow A_{p_i}(w')) \rightarrow (A_{q}(w) \leftrightarrow A_{q}(w')))\\
	\pol{p} \subseteq \pol{q} &\Leftrightarrow& \forall w \exists w' & (\bigwedge_{1 \leq i \leq n} A_{p_i}(w) \leftrightarrow A_{q_i}(w'))\\
	\pol{p} \,\mid\,\! \pol{q}& \Leftrightarrow &\forall w \forall w' & (\bigvee_{1 \leq i \leq n} A_{p_i}(w) \leftrightarrow \neg A_{q_i}(w'))\\
\end{array}$$

The latter two so-called \emph{inclusion} and \emph{exclusion} atoms were introduced by Galliani in \cite{Galliani:2011}.
In particular all above atoms are \FO-definable. The independence atom $\pol{p}_1\bot_{\pol{q}}\pol{p}_2$ is also \FO-definable in the obvious way\footnote{Since the \FO-formula $\phi$ may only depend on the width, we restrict ourselves to occurrences of $\pol{p}_1\bot_{\pol{q}}\pol{p}_2$ where $\card{\pol{p}_1}=\card{\pol{p}_2}=\card{\pol{q}}$, if these sets are nonempty, which we can always assume without loss of generality by repeating variable occurrences, the case that one of these sets is empty can then be encoded into widths that are not multiples of $3$ in a straightforward manner.}.

We use \MILD to denote the logic obtained from \MIL by replacing the independence atom with the generalized dependence atom $D$.
Our complexity upper bounds from Section~\ref{sect:complexity} can be generalized to 
arbitrary dependence atoms by adding processing rules which verify the corresponding first order formulae. For model checking, we simply use the fact that first-oder formulas can be verified in polynomial time and obtain the following corollary:

\begin{corollary}\label{corollary:generalized atoms model checking}
	Let $D$ be an \FO-definable generalized dependence atom. 
	Then $\MC{\MILD}$ is in $\NP$.
\end{corollary}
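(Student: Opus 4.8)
The plan is to reuse the nondeterministic polynomial-time algorithm for $\MILMC$ from the proof of Theorem~\ref{theorem:milmc np complete} (Algorithm~\ref{alg:milmc}) almost verbatim, changing only the single case that handles the atom. All the remaining cases---the literals $p,\overline p$, the Boolean connectives $\wedge,\vee$, and the modalities $\Diamond,\Box$---never mention the independence atom and so carry over unchanged. In particular the existential guesses for $\vee$ (a split $T_1\cup T_2=T$) and for $\Diamond$ (a successor team $T'\in\succTeamSet{T}$) have size polynomial in $|W|$, and the recursion follows the syntax tree of the input formula, so the whole recursion tree has size linear in $|\phi|$ and each node contributes at most one polynomial-size guess plus polynomial work. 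Hence the control structure of the algorithm already witnesses membership in $\NP$, provided the new atom case is polynomial.

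The one case to adapt is $\phi=D(p_{i_1},\dots,p_{i_n})$. Here I would fix, once and for all, a first-order sentence $\psi_D$ defining $D$ (which exists by the \FO-definability assumption), then at runtime construct the first-order structure $\mathcal{A}$ with universe $T$ and unary relations $A_j^{\mathcal{A}}=\{w\in T\mid p_{i_j}\in\pi(w)\}$ for $1\le j\le n$, exactly as in the definition of definability, and return the truth value of $\mathcal{A}\models\psi_D$. By the defining property of $\psi_D$ this value equals the truth value of $\Model,T\models D(p_{i_1},\dots,p_{i_n})$, so correctness of the modified algorithm follows immediately from correctness of the original one; nothing in the inductive argument changes.

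It then remains to bound the cost of this new leaf case. Since $D$ is a single, fixed atom, its defining sentence $\psi_D$ has constant size; the structure $\mathcal{A}$ has universe $T$ with $|T|\le|W|$ and is constructible from $\Model$ and $T$ in polynomial time. Evaluating a \emph{fixed} first-order sentence on a finite structure is possible in time polynomial in the size of the structure (the data complexity of \FO lies in $\mathrm{P}$, indeed in $\mathrm{AC}^0$), so the atom check runs in time polynomial in $|W|$. Consequently every case of the modified algorithm runs in nondeterministic polynomial time, which yields $\MC{\MILD}\in\NP$.

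The step that I expect to require the most care is precisely this complexity accounting for the atom case: it is essential that $\psi_D$ is fixed, so that one may invoke the polynomial data complexity of \FO rather than its combined complexity. A naive bottom-up evaluation of $\psi_D$ runs in time $|\mathcal{A}|^{O(k)}$, where $k$ is the number of variables of $\psi_D$; for fixed $D$ this $k$ is a constant and the bound is polynomial, whereas for a growing defining formula the same quantity could become superpolynomial. Since the statement concerns a single \FO-definable atom, the fixed-formula regime applies and the argument goes through.
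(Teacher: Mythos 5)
Your proposal is correct and matches the paper's own argument, which likewise just augments Algorithm~\ref{alg:milmc} with a case that evaluates the fixed defining first-order sentence on the induced structure over $T$ and appeals to the polynomial-time data complexity of \FO. Your extra remark distinguishing the fixed-formula regime from a growing family of defining formulae is a sensible precaution but not needed for the statement as given.
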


For the satisfiability problem, we generalize the proof of Theorem~\ref{lemmaeso} in case~(vii) to arbitrary dependence atoms which are definable by a $[\exists^*\forall^2\exists^*]$ formula.

\begin{corollary}\label{corollary:generalized atoms satisfiability}
	Let ${D}$ be a generalized dependence atom that is \FO-definable by a (family of) first order formula(e) in the prefix class $[\exists^*\forall^2\exists^*]$. 
	Then $\SAT{\ML({D})}$ is in $\NEXP$.
\end{corollary}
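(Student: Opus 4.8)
The plan is to adapt the proof of Theorem~\ref{theorem:milsat in nexp} essentially unchanged, replacing only the treatment of the independence atom (clause~(vii) in Lemma~\ref{lemmaeso}) by a treatment of the generalized atom $D$. Concretely, I would first establish the analogue of Lemma~\ref{lemmaeso}: for any formula $\phi\in\ML(D)$ there is an existential second-order sentence $\phi^*$ of the shape $\exists Y_1\ldots Y_m\,\forall xy\,\exists z_1\ldots z_k\,\theta$ with $\theta$ quantifier-free, equisatisfiable with $\phi$ over the translated first-order structures. All clauses of the original translation that do not involve the independence atom ($p_i$, $\overline{p_i}$, conjunction, disjunction, $\Box$, $\Diamond$) carry over verbatim, since they do not refer to the atom at all; the only new case is $\phi = D(p_1,\dots,p_n)$.

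For that case I would use the hypothesis that $D$ is \FO-definable by a formula $\psi_D$ in the prefix class $[\exists^*\forall^2\exists^*]$. By definition, $\Model,T\models D(p_1,\dots,p_n)$ holds iff the first-order structure $\mathcal{A}$ with universe $T$ and unary relations $A_i^{\mathcal{A}}$ (given by $w\in A_i^{\mathcal{A}}\Leftrightarrow p_i\in\pi(w)$) satisfies $\psi_D$. The point is to relativise $\psi_D$ to the current team: I would take $\psi_D$ in prenex form $\exists\bar u\,\forall v_1 v_2\,\exists\bar w\,\rho$ and define $\phi'$ by relativising all its quantifiers to the predicate $T$ (the existential ones by conjunction with $T$, the universal one by implication from $T$), and by replacing each occurrence of $A_i(\cdot)$ with $A_{p_i}(\cdot)$ to match the actual variables named in the atom. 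Correctness of this relativised $\phi'$ is immediate from the definition of \FO-definability. Crucially, because $\psi_D$ already lies in $[\exists^*\forall^2\exists^*]$, the relativised $\phi'$ still has quantifier prefix $\exists^*\forall^2\exists^*$, so it fits the required shape without any prefix manipulation beyond what the other clauses already demand.

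The second step is the prefix-normalisation that turns $\phi'$ into $\phi^*$: this is exactly the quantifier-shifting and bound-variable-renaming argument already carried out in the proof of Lemma~\ref{lemmaeso} for the disjunction case, and it applies verbatim, since the new atom case produces a formula of precisely the form the normalisation expects. The only subtlety is that shifting the universal block $\forall v_1 v_2$ of the atom outward must merge with the single shared $\forall xy$ block of the surrounding translation; because the G\"odel-Kalm\'ar-Sch\"utte class allows exactly two universal variables, I would reuse the same pair $\forall xy$ for the atom's two universal quantifiers rather than introducing fresh ones, which keeps the universal block at width two. Finally, I would conclude as in Theorem~\ref{theorem:milsat in nexp}: $\phi$ is satisfiable iff $\phi^*$ is, iff its first-order part \eqref{FOpart} is, which lies in $[\exists^*\forall^2\exists^*,all]$ and is therefore decidable in $\mathrm{NTIME}(2^{O(|\phi^*|)})$ by Proposition~\ref{prefix}; polynomiality of $\phi\mapsto\phi^*$ (using P-uniform \FO-definability so that $\psi_D$ itself is computable in polynomial time) then gives membership in $\NEXP$.

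The main obstacle I anticipate is the bookkeeping needed to ensure the universal quantifier count never exceeds two after all clauses are composed. The danger is that nested applications of the atom clause, or its interaction with the $\Box/\Diamond$ and disjunction clauses, might seem to accumulate universal quantifiers; the resolution is that the translation is designed so that every subformula contributes a single shared $\forall xy$ block that is reused rather than appended, and the atom's two universals are identified with this block. Verifying that this identification is sound for an arbitrary $[\exists^*\forall^2\exists^*]$ defining formula---in particular that renaming its universal variables to $x,y$ does not clash with variables introduced elsewhere---is the one place where care is required, but it is a direct extension of the renaming already performed in the disjunction case.
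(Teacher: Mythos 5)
Your proposal is correct and follows essentially the same route as the paper, which itself only remarks that one generalizes case~(vii) of the translation in Lemma~\ref{lemmaeso} to the relativized defining formula of $D$ and reuses the shared $\forall xy$ block. The one small imprecision is that a general $[\exists^*\forall^2\exists^*]$ defining formula has a leading block of \emph{first-order} existentials, so the normal form \eqref{translation} must be relaxed to $\exists Y_1\ldots Y_m\exists u_1\ldots u_j\forall xy\exists z_1\ldots z_k\,\theta$ (these $\exists u_i$ can be hoisted to the front since each clause of the translation is a top-level conjunction under existential quantifiers only); the resulting first-order part is still in $[\exists^*\forall^2\exists^*,\mathit{all}]$, so the conclusion stands.
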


\section{Example: The Dining Cryptographers}\label{sect:dining cryptographers}

\newcommand{\bit}{\mathit{bit}}
\newcommand{\announce}{\mathit{announce}}
The dining cryptographers~\cite{Chaum-DINING-CRYPTOGRAPHERS-JCRYPT-1988}, a standard example for anonymous broadcast, is the following problem: A group of cryptographers $\set{c_0,\dots,c_{n-1}}$ with $n\ge 3$ sit in a restaurant, where $c_i$ sits between $c_{i-1}$ and $c_{i+1}$. (Indices of the cryptographers are always modulo $n$, and $i$ always ranges over $0,\dots,n-1$). After dinner, it turns out that someone already paid. There are only two possibilities: Either one of the cryptographers secretly paid, or the NSA did. Naturally, they want to know which of these is the case, but without revealing the paying cryptographer if one of them paid. They use the following protocol:

\begin{itemize}
 \item For each $i$, let $p_i$ be $1$ iff $c_i$ paid. Each $c_i$ knows the value of $p_i$, but not of $p_j$ for $j\neq i$. There is at most one $i$ with $p_i=1$. The protocol computes the value $p_0\oplus p_1\oplus\dots\oplus p_{n-1}$, which is the same as $p_0\vee p_1\vee\dots\vee p_{n-1}$.
 \item Each adjacent pair $\set{c_i,c_{i+1}}$ computes a random bit $\bit_{\set{i,i+1}}$.
 \item Each $c_i$ publicly announces the value $\announce_i=p_i\oplus\bit_{\set{i,i-1}}\oplus\bit_{\set{i,i+1}}$.
 \item Then, $p_0\oplus p_1\oplus\dots\oplus p_{n-1}=\announce_0\oplus \announce_1\oplus\dots\oplus \announce_{n-1}$.
\end{itemize}

The protocol clearly computes the correct answer, the interesting aspect is the \emph{anonymity requirement}: No cryptographer $c_i$ should learn anything about the values $p_j$ for $j\neq i$ except for what follows from the values $p_i$ or the result (if $c_i$ or the NSA paid then $c_j$ did not). The protocol models anonymous broadcast, since the message ``$1$'' is, if sent, received by all cryptographers, but the sender remains anonymous. We formalize this using modal independence logic. We start by capturing the protocol in the following Kripke model:

\begin{center}
\scalebox{0.75}{
\begin{tikzpicture}[->,>=stealth',shorten >=1pt,auto,node distance=1.75cm, semithick]
    \tikzstyle{every state}=[fill=white,draw=black,text=blue]

    \node[state,minimum size=1.1cm] at (0,0.5)   (A) {\begin{small}$q_0$\end{small}};
    \node[state,minimum size=1.1cm] at (-4,-2) (B) {\begin{small}$p_{\mathtext{NSA}}$\end{small}};
    \node[state,minimum size=1.1cm] at (-2,-2) (C) {\begin{small}$p_0$\end{small}};
    \node[state,minimum size=1.1cm] at (0,-2) (D) {\begin{small}$p_1$\end{small}};
    \node at (2,-2) {$\dots$};
    \node[state,minimum size=1.1cm] at (4,-2) (E) {\begin{small}$p_{n-1}$\end{small}};
    
    \draw [->] (A) edge (B);
    \draw [->] (A) edge (C);
    \draw [->] (A) edge (D);
    \draw [->] (A) edge (E);
    
    \node at (-4.5,-4.00)  (B1) {\begin{tiny}\rotatebox{90}{$0\dots00$}\end{tiny}};
    \node at (-4.25,-4.00) (B2) {\begin{tiny}\rotatebox{90}{$0\dots01$}\end{tiny}};
    \node at (-3.85,-4.00) (B3) {\begin{tiny}$\dots$\end{tiny}};
    \node at (-3.55,-4.00) (B4) {\begin{tiny}\rotatebox{90}{$1\dots11$}\end{tiny}};
    
    \node at (-2.5,-4.00)  (C1) {\begin{tiny}\rotatebox{90}{$0\dots00$}\end{tiny}};
    \node at (-2.25,-4.00) (C2) {\begin{tiny}\rotatebox{90}{$0\dots01$}\end{tiny}};
    \node at (-1.85,-4.00) (C3) {\begin{tiny}$\dots$\end{tiny}};
    \node at (-1.55,-4.00) (C4) {\begin{tiny}\rotatebox{90}{$1\dots11$}\end{tiny}};
    
    \node at (-0.5,-4.00)  (D1) {\begin{tiny}\rotatebox{90}{$0\dots00$}\end{tiny}};
    \node at (-0.25,-4.00) (D2) {\begin{tiny}\rotatebox{90}{$0\dots01$}\end{tiny}};
    \node at ( 0.15,-4.00) (D3) {\begin{tiny}$\dots$\end{tiny}};
    \node at ( 0.45,-4.00) (D4) {\begin{tiny}\rotatebox{90}{$1\dots11$}\end{tiny}};

    \node at ( 3.5,-4.00)  (E1) {\begin{tiny}\rotatebox{90}{$0\dots00$}\end{tiny}};
    \node at ( 3.75,-4.00) (E2) {\begin{tiny}\rotatebox{90}{$0\dots01$}\end{tiny}};
    \node at ( 4.15,-4.00) (E3) {\begin{tiny}$\dots$\end{tiny}};
    \node at ( 4.45,-4.00) (E4) {\begin{tiny}\rotatebox{90}{$1\dots11$}\end{tiny}};
    
    \draw [very thin,->] (B) edge (B1);
    \draw [very thin,->] (B) edge (B2);
    \draw [very thin,->] (B) edge (B4);
    
    \draw [very thin,->] (C) edge (C1);
    \draw [very thin,->] (C) edge (C2);
    \draw [very thin,->] (C) edge (C4);

    \draw [very thin,->] (D) edge (D1);
    \draw [very thin,->] (D) edge (D2);
    \draw [very thin,->] (D) edge (D4);

    \draw [very thin,->] (E) edge (E1);
    \draw [very thin,->] (E) edge (E2);
    \draw [very thin,->] (E) edge (E4);
\end{tikzpicture}}
\end{center}

The protocol starts in $q_0$, the model then branches into states $p_{\mathtext{NSA}}$, $p_0$, \dots, $p_{n-1}$, depending on whether the NSA or some $c_i$ paid. Each of these states has $2^n$ successor states, for the $2^n$ possible random bit values, these states are \emph{final}. The relation $R$ is as indicated. We use the following variables:

\begin{itemize}
 \item $p_{\mathtext{NSA}}$ and $p_i$ are true if the NSA, resp.~cryptographer $c_i$ paid, i.e., in the states denoted with the same name as the variable and in their successors.
 \item each of the $n$ variables $\bit_{\set{i,i+1}}$ is true in the states where the bit shared between $c_i$ and $c_{i+1}$ is $1$.
 \item each $\announce_i$ is true in all final states which satisfy $p_i\oplus\bit_{\set{i,i-1}}\oplus\bit_{\set{i,i+1}}$ (this encodes that the cryptographers follow the protocol).
\end{itemize}

For each $c_i$, we define the set $\pol{k}_i$ of the variables whose values $c_i$ knows after the protocol run as $\pol{k}_i:=\set{p_i,\bit_{\set{i,i-1}},\bit_{\set{i,i+1}}}\cup\set{\announce_j\ \vert\ j\neq i}$. Clearly, $c_i$ also knows the value $\announce_i$, but since this can be computed from $p_i$, $\bit_{\set{i,i-1}}$ and $\bit_{\set{i,i+1}}$, we omit it from $\pol{k}_i$. 

The formula expressing the anonymity requirement consists of several parts, one \emph{global} part and then, for each combination of cryptographers, a \emph{local} part. We start with the global part, which merely expresses that none of the individual bits that some cryptographer knows determines the value of any $p_i$ on its own, with the exception that if $p_i=1$, then of course cryptographer $c_i$ knows that $p_j=0$ for all $j\neq i$. The global part $\varphi_g$ is as follows:

\newcommand{\compatible}[2]{\ensuremath{\Diamond\Diamond(#1\wedge#2)}}
\newcommand{\independent}[2]{\ensuremath{\compatible{#1}{#2}\wedge\compatible{#1}{\overline{#2}}\wedge\compatible{\overline{#1}}{#2}\wedge\compatible{\overline{#1}}{\overline{#2}}}}

$$
\begin{array}{r}
\displaystyle\varphi_g=\bigwedge_{  \substack{         v\in\set{\bit_{\set{i,i+1}},\announce_i},  \\ k\in\set{0,\dots,n-1}}}
\hspace*{-1cm}\independent{v}{p_k}
\\
\wedge\bigwedge_{i\neq j}\compatible{p_i}{\overline{p_j}}\wedge\compatible{\overline{p_i}}{p_j}\wedge\compatible{\overline{p_i}}{\overline{p_j}}
\end{array}
$$

The first line of the formula requires that, for every variable $v$ of the $\bit_{\set{i,i+1}}$ or $\announce_i$-variables, and every cryptographer $c_k$, every combination of truth values of $v$ and $p_k$ appears. This encodes that the value of a \emph{single} variable $v$ does not give away any information about the  value of any $p_k$. The second line is a similar requirement for the value $p_i$: If $c_i$ paid, then she knows that $c_j$ did not pay, for $i\neq j$. However, the combination ``$p_i\wedge p_j$'' for $i\neq j$ should be the only one not appearing. Hence the formula requires that all other combinations appear in some final state. The global part $\varphi_g$ hence ensures that each individual bit that $c_i$ knows does not tell him whether $c_j$ paid, unless of course $i=j$ or $p_i=1$. 

The more interesting part is to encode that even the \emph{combination} of the above bits does not lead to additional knowledge; this is where the independence atom is crucial. We introduce some notation to enumerate the variables in $\pol{k}_i$:

\begin{itemize}
 \item for each $i$, let $\pol{k}_i=\set{v^i_1,\dots,v^i_{n+2}}$, 
 \item for $j\leq k$, let $V^i_{j\rightarrow k}=\set{v^i_j,\dots,v^i_k}$, 
 \item let $V^i_j=V^i_{j\rightarrow j}$. 
\end{itemize}

We now use modal independence logic to express that if each single variable from $\pol{k}_i$ does not tell $c_i$ anything about the value of $p_k$, then their combination does not, either. This is achieved with the following formula:

$$
\varphi^{i,k}=\Box\Box\left(  (V^i_1\bot_{p_k}V^i_2) \wedge (V^i_{1\rightarrow 2}\bot_{p_k} V^i_3) \wedge \dots \wedge (V^i_{1\rightarrow n+1}\bot_{p_k}V^i_{n+2}) \right).
$$

This formula requires that for each $j$, each pair of variable assignments $I_1$ to $V^i_{1\rightarrow j-1}$ and $I_2$ to $V^i_j$ that is ``locally compatible'' with some truth value $P(p_k)$---in other words, neither of these assignments by itself implies that the actual value of $p_k$ is not $P(p_k)$---is also compatible with that value for the combination of $I_1$ and $I_2$, i.e., there is some state satisfying $I_1\cup I_2\cup P$ (where the notion of a state satisfying a propositional assignment is defined as expected and the union of these assignments is well-defined since their domains are disjoint). As a 
consequence, the formula requires that for each $I\colon \pol{k}_i\rightarrow\set{0,1}$ and each $P\colon\set{p_k}\rightarrow\set{0,1}$, if for each $v\in \pol{k}_i$, there is a world $w$ such that $w\models \restr I{\set v}$ and $w\models P$, then there is a world $w$ such that $w\models I\cup P$. 

The following proposition formally states that our above-developed formulas indeed express the anonymity property of the protocol as intended. From a single cryptographer $c_i$'s point of view, it says that every observation $I$ which can arise when $c_i$ follows the protocol, as long as some cryptographer different from $c_i$ paid for the dinner, then for every $k$ different from $i$, both possibilities---$c_k$ paid for the dinner, or $c_k$ did not pay---cannot be ruled out by the observation $I$. We say that an assignment $I\colon\pol{k}_i\cup\set{\announce_i}\rightarrow\set{0,1}$ is \emph{consistent} if $i$ follows the protocol, i.e., if $I(\announce_i)=I(p_i)\oplus I(\bit_{\set{i,i-1}})\oplus I(\bit_{\set{i,i+1}})$. Note that in the models we are interested in, only consistent assignments appear in final states.

\begin{proposition}\label{prop:dc}
 If a Kripke model $M=(W,R,\Pi)$ satisfies the formula $\varphi_g\wedge\bigwedge_{i,k\in\set{0,\dots,n-1},i\neq k}\varphi^{i,k}$ at the world $q_0$, then the team $T=R(R(\set{q_0}))$ satisfies the following condition:
 For each $i\neq k\in\set{0,\dots,n-1}$ and each consistent $I\colon\pol{k}_i\cup\set{\announce_i}\rightarrow\set{0,1}$ with $I(p_i)=0$ and $\oplus_{j=0}^{n-1}I(\announce_j)=1$, there are worlds $w^I_1,w^I_2\in T$ with $w^I_1\models I,p_k$ and $w^I_2\models I,\overline{p_k}$.
\end{proposition}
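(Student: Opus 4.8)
The plan is to unfold the two box-modalities and the diamond-modalities against the singleton team $\set{q_0}$, reducing everything to statements about the final team $T=\succTeam{\succTeam{\set{q_0}}}$, and then to build the required witnesses $w^I_1,w^I_2$ by an induction driven by the independence atoms. First I would observe that, by the semantics of $\Box$, we have $\Model,\set{q_0}\models\Box\Box\psi$ iff $\Model,T\models\psi$; applied to each $\varphi^{i,k}$ and using the semantics of $\wedge$, this yields that every conjunct $V^i_{1\rightarrow j}\bot_{p_k}V^i_{j+1}$ holds in the team $T$ (we invoke the semantic clause for $\bot$ directly, which is defined for sequences of arbitrary length). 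Dually, since a legal successor team of $\set{q_0}$ is any nonempty subset of $\succTeam{\set{q_0}}$ and a legal successor team of that is any nonempty subset of $T$, one checks that $\Model,\set{q_0}\models\compatible{x}{y}$ holds iff there is a single world $w\in T$ with $w\models x$ and $w\models y$. Thus from $\Model,\set{q_0}\models\varphi_g$ we extract, for each literal combination $\varphi_g$ mentions, a witnessing world in $T$.

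Next I would record the base facts. Fix $i\neq k$, the consistent assignment $I$, and a target bit $P\in\set{0,1}$ for $p_k$; write $x^b$ for the literal $x$ if $b=1$ and $\overline x$ if $b=0$. The claim is that for every single variable $v\in\pol{k}_i$ there is a world in $T$ satisfying both $v^{I(v)}$ and $p_k^P$. For $v$ a $\bit$- or $\announce$-variable this is exactly one of the four conjuncts of $\independent{v}{p_k}$ in the first line of $\varphi_g$. The only remaining variable of $\pol{k}_i$ is $p_i$; here I use the hypothesis $I(p_i)=0$, so that the literal needed is $\overline{p_i}$, and the witness $\compatible{\overline{p_i}}{p_k^P}$ is supplied by the conjunct of the second line of $\varphi_g$ indexed by the pair $(i,k)$, which is present since $i\neq k$. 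Note that the deliberately absent conjunct $\compatible{p_i}{p_k}$ is never needed precisely because $I(p_i)=0$; this is the formal counterpart of ``some cryptographer other than $c_i$ paid''.

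The heart of the argument is then an induction on $j\in\set{1,\dots,n+2}$ establishing: there is a world $w_j\in T$ with $w_j\models\restr{I}{V^i_{1\rightarrow j}}$ and $w_j\models p_k^P$. The base case $j=1$ is the single-variable fact for $v^i_1$. For the step, suppose $w_j$ is given and let $u\in T$ be the single-variable witness for $v^i_{j+1}$; both $w_j$ and $u$ satisfy $p_k^P$, hence $w_j\equiv_{\pi,p_k}u$. Applying the independence atom $V^i_{1\rightarrow j}\bot_{p_k}V^i_{j+1}$, which holds over $T$, to the pair $w_j,u$ produces $w''\in T$ agreeing with $w_j$ on $V^i_{1\rightarrow j}$, with $u$ on $V^i_{j+1}$, and with $w_j$ on $p_k$; this $w''$ serves as $w_{j+1}$. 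At $j=n+2$ we obtain $w\in T$ agreeing with $I$ on all of $\pol{k}_i=V^i_{1\rightarrow n+2}$ and satisfying $p_k^P$. Since $w$ is a final state it obeys the protocol, so its truth value of $\announce_i$ equals $I(p_i)\oplus I(\bit_{\set{i,i-1}})\oplus I(\bit_{\set{i,i+1}})=I(\announce_i)$ by consistency of $I$; hence $w\models I$ on the whole of $\pol{k}_i\cup\set{\announce_i}$. Taking $P=1$ gives $w^I_1$ and $P=0$ gives $w^I_2$, as required.

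The main obstacle I anticipate is not the induction itself, which is mechanical once set up, but the careful bookkeeping in the two preceding steps: verifying that the single-variable compatibility facts extracted from $\varphi_g$ cover \emph{exactly} the variables of $\pol{k}_i$ (in particular treating $p_i$ separately, and seeing that the missing atom $\compatible{p_i}{p_k}$ is harmless under the hypothesis $I(p_i)=0$), and confirming that every witness produced---both those from the $\Diamond\Diamond$-formulas of $\varphi_g$ and the world $w''$ supplied by each independence atom---genuinely lies in the team $T$ over which the atoms are evaluated, so that the induction can legitimately be iterated.
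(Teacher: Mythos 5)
Your proof is correct and follows exactly the route the paper sketches in the discussion preceding the proposition (the paper itself omits the formal argument as ``easy''): extract single-variable compatibility witnesses in $T$ from the $\Diamond\Diamond$-conjuncts of $\varphi_g$, then chain them together by iterating the independence atoms of $\varphi^{i,k}$ over $T$. The only caveats, neither of which is an error, are that the final agreement on $\mathit{announce}_i$ rests (as you say) on the standing assumption that final states obey the protocol---stated by the paper immediately before the proposition---rather than on the formula itself, and that the hypothesis $\oplus_{j}I(\mathit{announce}_j)=1$ is never actually used in your argument.
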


We omit the easy proof; the proposition immediately follows from the semantics of the independence atom.

Our discussion only treats the anonymity property of the protocol. For a complete treatment, one also has to address other aspects as e.g., correctness, we omit this discussion here.
%


Note that in comparison to express the anonymity requirement using epistemic logic (see, e.g., \cite{AlBatainehvdMeyden-ABSTRACTION-MODEL-CHECKING-DINING-CRYPTOGRAPHERS-TARK-2011,Schnoor-Deciding-Epistemic-Crypt-Properties-ESORICS-2012}),
we do not use the relation of the Kripke model to represent knowledge, but to express branching time. In particular, our approach only uses a single modality.

\section{Expressiveness}
\label{sect:expressiveness}

We now compare the expressiveness of \MIL and classical modal logic, which we abbreviate with \ML. We show that \MIL is strictly more expressive than \ML on teams (simply because \MIL is not downwards closed), but that their expressiveness coincides on singleton teams. However, on singletons, \MIL is exponentially more succinct than \ML. We then study the expressiveness of \MIL with a generalized dependence atom as introduced in Section~\ref{sect:generalized dependence} instead of the independence atom. 

\subsection{Expressiveness of MIL and ML}

Clearly, since \MIL is not downward-closed, we obtain the following:

\begin{proposition}
 There is an \MIL-formula $\varphi_\MIL$ such that there is no \ML-formula $\varphi_\ML$ with the property that
$M,T\models\varphi_\MIL$ if and only if $M,T\models\varphi_\ML$
 for all models $M$ and all teams $T$. 
\end{proposition}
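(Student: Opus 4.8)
The plan is to exploit the contrast between \ML and \MIL isolated in the preceding sentence: every \ML-formula is \emph{downward closed} (its truth on a team is inherited by every subteam), whereas a suitable \MIL-formula is not. So it suffices to name one \MIL-formula that fails downward closure and invoke downward closure of \ML.

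First I would record downward closure of \ML as a direct consequence of the flatness Lemma. If $M,T\models\varphi_\ML$ then flatness gives $M,w\models\varphi_\ML$ for every $w\in T$; any $T'\subseteq T$ satisfies $T'\subseteq T$, so $M,w\models\varphi_\ML$ still holds for every $w\in T'$, and flatness again yields $M,T'\models\varphi_\ML$. Thus for every \ML-formula $\varphi_\ML$ and all teams $T'\subseteq T$, $M,T\models\varphi_\ML$ implies $M,T'\models\varphi_\ML$.

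Next I would exhibit the separating formula. Take $\varphi_\MIL := p \mathrel{\bot} q$, the independence atom with empty middle sequence $\pol{r}$. Reading off the semantics, $M,T\models p\mathrel{\bot} q$ holds iff for all $w,w'\in T$ there is $w''\in T$ with $w''\equiv_{\pi,p}w$ and $w''\equiv_{\pi,q}w'$; informally, the set of $(p,q)$-value combinations realized in $T$ is a full product set. For the model, let $M$ contain four worlds $w_{00},w_{01},w_{10},w_{11}$, where $w_{ab}$ satisfies $p$ iff $a=1$ and satisfies $q$ iff $b=1$. On the team $T=\{w_{00},w_{01},w_{10},w_{11}\}$ all four combinations occur, so $M,T\models\varphi_\MIL$. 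On the subteam $T'=\{w_{00},w_{11}\}\subseteq T$, the pair $w=w_{00}$, $w'=w_{11}$ demands some $w''\in T'$ agreeing with $w$ on $p$ (hence $\overline p$) and with $w'$ on $q$ (hence $q$), i.e.\ $w''=w_{01}$; but $w_{01}\notin T'$, so $M,T'\not\models\varphi_\MIL$. Hence $\varphi_\MIL$ is not downward closed.

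Finally I would combine the two facts: were some \ML-formula $\varphi_\ML$ equivalent to $\varphi_\MIL$ on all models and teams, then $M,T\models\varphi_\MIL$ would give $M,T\models\varphi_\ML$, downward closure would give $M,T'\models\varphi_\ML$, and equivalence would give $M,T'\models\varphi_\MIL$, contradicting the previous step; so no such $\varphi_\ML$ exists. The only point needing care—and the nearest thing to an obstacle, though it is mere bookkeeping—is the correct handling of the empty subscript $\pol{r}$ in the atom's semantics: one must check that both the premise $w\equiv_{\pi,\pol{r}}w'$ and the conjunct $w''\equiv_{\pi,\pol{r}}w$ are vacuously satisfied, so that $p\mathrel{\bot} q$ genuinely reduces to the ``all combinations present'' condition used above.
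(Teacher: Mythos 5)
Your proof is correct and follows essentially the same route as the paper: establish downward closure of \ML (via flatness) and exhibit the non-downward-closed formula $p\bot_{\emptyset}q$ together with a four-world team and its two-world subteam realizing only the assignments $p\wedge q$ and $\overline{p}\wedge\overline{q}$. The paper's own proof uses exactly this counterexample, so there is nothing to add.
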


\iflongproofs
\begin{proof}
 This is true for every formula $\varphi_\MIL$ that is not downwards closed: In this case we have teams $T'\subsetneq T$ of the same model $M$ with $M,T\models\varphi_\MIL$ and $M,T'\not\models\varphi_\MIL$. However, for any modal formula $\varphi_\ML$, clearly if $M,w\models\varphi_\ML$ for all $w\in T$, then the same is true for all $w\in T'$ as $T'\subseteq T$. An easy example for a formula that is not downwards closed is $x\bot_\emptyset y$. This formula is satisfied on a team $T$ in which every combination of truth values of $x$ and $y$ is realized in some world, but not on its subset $T'$ containing only worlds $w$ and $w'$ with assignments $x\wedge y$ and $\overline x\wedge \overline y$, respectively.
\end{proof}
\fi

The proposition remains true%
\iflongproofs
, with the same proof, 
\else\  
\fi
 for classical modal logic extended with a global modality, or for \MDL, since these logics remain downward-closed. In~\cite{se09}, it was shown that \MDL is as expressive as classical modal logic on singletons.
Therefore, a natural question to ask is whether on singletons, \MIL is still more expressive than \ML. We show that this is not the case, but we will also see that \MIL is exponentially more succinct than \ML, even on singletons.
For our proof, we use bisimulations, which are a well-established tool to compare expressiveness of different concepts. We recall the classical definition of bisimulation for modal logic:

\begin{definition}
 Let $M=(W,R,\Pi)$ and $M'=(W',R',\Pi')$ be Kripke models. A relation $Z\subseteq W\times W'$ is a \emph{modal bisimulation} if for every $(w,w')\in Z$, the following holds:
 \begin{itemize}
  \item $\Pi(w)=\Pi'(w')$, i.e., $w$ and $w'$ satisfy the same propositional variables,
  \item if $u$ is an $R$-successor of $w$, then there is an $R'$-successor $u'$ of $w'$ such that $(u,u')\in Z$ (forward condition),
  \item if $u'$ is an $R'$-successor of $w'$, then there is an $R$-successor $u$ of $w$ such that $(u,u')\in Z$ (backward condition).
 \end{itemize}
\end{definition}

It is well-known and easy to see that modal logic is invariant under bisimulation, i.e., if $Z$ is a bisimulation and $(w,w')\in Z$, then $w$ and $w'$ satisfy the same modal formulas. We now ``lift'' this property to modal independence logic by considering a bisimulation $Z$ as above on the team level:

\begin{definition}\label{Z-bisimilarity}
 Let $M=(W,R,\Pi)$ and $M'=(W',R',\Pi')$ be models, let $T\subseteq W$ and $T'\subseteq W'$ be teams. Let $Z\subseteq W\times W'$ be a modal bisimulation. Then $T$ and $T'$ are $Z$-bisimilar if the following is true:
 \begin{itemize}
  \item for each $w\in T$, there is a $w'\in T'$ such that $(w,w')\in Z$,
  \item for each $w'\in T'$, there is a $w\in T$ such that $(w,w')\in Z$.
 \end{itemize}
\end{definition}

We now show that on the team level, bisimulation for modal independence logic plays the same role as it does on the world level for modal logic: Simply stated, bisimilar teams satisfy the same formulas. Due to Lemma~\ref{lemma:dep_simulation}, the result also applies to modal dependence logic. This lemma may be of independent interest (for example, it implies a ``family-of-trees''-like model property), we use it to compare the expressiveness of \MIL and \ML.
  
\begin{lemma}\label{lemma:mil invariant under bisimulation}
 Let $M=(W,R,\Pi)$ and $M'=(W',R',\Pi')$ be Kripke models, let $T\subseteq W$ and $T'\subseteq W'$ be teams that are $Z$-bisimilar for a modal bisimulation $Z$. Then for any \MIL-formula $\varphi$, we have that 
$M,T\models\varphi$ if and only if $M',T'\models\varphi$.
\end{lemma}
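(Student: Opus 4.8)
The proof proceeds by structural induction on the \MIL-formula $\varphi$. The key invariant to maintain throughout is $Z$-bisimilarity of the relevant teams, so the first step is to identify exactly how each connective interacts with this relation. I would begin by recording a technical observation that will be used repeatedly: if $(w,w')\in Z$, then since $Z$ is a modal bisimulation we have $\Pi(w)=\Pi'(w')$, and hence $w\equiv_{\pi,\pol{p}}w'$ for \emph{every} sequence $\pol{p}$ of variables. This is the link that allows bisimilar worlds to be substituted for one another inside the semantic clauses for the atoms.

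The base cases (the literals $p$, $\overline p$) are immediate: if $M,T\models p$ then every $w\in T$ has $p\in\pi(w)$, and for each $w'\in T'$ the backward condition of $Z$-bisimilarity provides a $w\in T$ with $(w,w')\in Z$, whence $\Pi(w)=\Pi'(w')$ and $p\in\pi'(w')$; the forward direction is symmetric. The connective cases are the standard team-semantic manipulations. For $\wedge$ the claim is immediate from the induction hypothesis applied to both conjuncts on the same teams $T,T'$. For $\vee$, given a splitting $T=T_1\cup T_2$ I would set $T_i'=\set{w'\in T'\mid \exists w\in T_i:(w,w')\in Z}$, verify that $T_1'\cup T_2'=T'$ and that each $T_i$ is $Z$-bisimilar to $T_i'$, then apply the induction hypothesis; the converse direction is symmetric. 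For $\Box$ and $\Diamond$ the crucial sublemma is that if $T$ and $T'$ are $Z$-bisimilar, then so are the successor teams: the forward and backward conditions of $Z$ guarantee that $\succTeam{T}$ and $\succTeam{T'}$ are $Z$-bisimilar, and that for each $U'\in\succTeamSet{T'}$ there is a $Z$-bisimilar $U\in\succTeamSet{T}$ and vice versa. These facts reduce the modal cases to the induction hypothesis.

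The main obstacle is the independence atom $\pol{p}_1\bot_{\pol{q}}\pol{p}_2$, since its semantics involves an existential witness $w''\in T$ that must simultaneously agree with $w$ on $\pol{p}_1$ and $\pol{q}$ and with $w'$ on $\pol{p}_2$. The plan here is to exploit the preliminary observation above. Suppose $M,T\models\pol{p}_1\bot_{\pol{q}}\pol{p}_2$ and take $a',b'\in T'$ with $a'\equiv_{\pi',\pol{q}}b'$; by the backward condition choose $a,b\in T$ with $(a,a')\in Z$ and $(b,b')\in Z$. Bisimilarity gives $a\equiv_{\pi,\pol{q}}a'$ and $b\equiv_{\pi,\pol{q}}b'$ over all variables, so $a\equiv_{\pi,\pol{q}}b$, and the hypothesis yields a witness $c\in T$ with $c\equiv_{\pi,\pol{p}_1}a$, $c\equiv_{\pi,\pol{p}_2}b$, $c\equiv_{\pi,\pol{q}}a$. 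By the forward condition pick $c'\in T'$ with $(c,c')\in Z$; then $c'\equiv_{\pi',\cdot}c$ over every sequence, so transitivity through $c$, $a$, $b$ gives $c'\equiv_{\pi',\pol{p}_1}a'$, $c'\equiv_{\pi',\pol{p}_2}b'$ and $c'\equiv_{\pi',\pol{q}}a'$, which is exactly the required witness in $T'$. The converse direction is symmetric, using the forward and backward conditions in the opposite roles. The delicate point is simply to keep track of which bisimulation condition (forward vs.\ backward) is needed to transport each of the three worlds $w,w',w''$ between the teams, but because $Z$-bisimilarity gives agreement on \emph{all} variables at once, no genuinely new idea beyond careful bookkeeping is required.
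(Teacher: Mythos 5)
Your proposal is correct and takes essentially the same route as the paper's proof: a structural induction in which the disjunction case uses the same $Z$-induced splitting $T_i'=\set{w'\in T'\mid (w,w')\in Z\text{ for some }w\in T_i}$, the modal cases rest on the same transfer of successor teams, and the independence atom is handled by the same pull-back of $u,u'$ into $T$, application of the hypothesis there, and push-forward of the witness into $T'$. No substantive differences.
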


\begin{proof}
 We show the lemma by induction on $\varphi$. Clearly it suffices to show that if $M,T\models_\MIL\varphi$, then $M',T'\models_\MIL\varphi$. Hence assume $M,T\models_\MIL\varphi$.
 \begin{itemize}
  \item Let $\varphi=x$ for some propositional variable $x$, and let $w'\in T'$. Since $T$ and $T'$ are $Z$-bisimilar, there is a world $w\in T$ with $(w,w')\in Z$. Since $M,T\models_\MIL x$, the variable $x$ is true at $w$ in $M$. Since $Z$ is a modal bisimulation, it follows that $x$ is true at $w'$ in $M'$, and hence every world $w'\in T$ satisfies $x$. Therefore, it follows that $M',T'\models\varphi$.
  \item If $\varphi=\neg x$, the proof is the same as above.
  \item Let $\varphi=\varphi_1\wedge\varphi_2$. This case trivially follows inductively.
  \item Let $\varphi=\varphi_1\vee\varphi_2$. Since $M,T\models\varphi$, it follows that $T=T_1\cup T_2$ for teams $T_1$ and $T_2$ with $M,T_1\models\varphi_1$ and $M,T_2\models\varphi_2$. We define teams $T_1'$ and $T_2'$ as follows:\
  \begin{itemize}
    \item $T_1'=\set{w'\in T'\ \vert\ (w,w')\in Z\mathtext{ for some }w\in T_1}$,
    \item $T_2'=\set{w'\in T'\ \vert\ (w,w')\in Z\mathtext{ for some }w\in T_2}$.
  \end{itemize}
  We prove the following:
  \begin{enumerate}
   \item $T'=T_1'\cup T_2'$
   \item $T_1$ and $T_1'$ are $Z$-bisimilar,
   \item $T_2$ and $T_2'$ are $Z$-bisimilar.
  \end{enumerate}
  By induction, it then follows that $M',T_1'\models\varphi_1$ and $M',T_2'\models\varphi_2$, which, since $T'=T_1'\cup T_2'$ implies that $M',T'\models\varphi$. We prove these points:
  \begin{enumerate}
   \item By construction, $T_1'\cup T_2'\subseteq T'$. Hence let $w'\in T'$. Since $T$ and $T'$ are $Z$-bisimilar, there is some $w\in T$ such that $(w,w')\in Z$. Since $T=T_1\cup T_2$, we can, without loss of generality, assume that $w\in T_1$. By definition of $T_1'$, it follows that $w'\in T_1'$.
   \item First let $w\in T_1\subseteq T$. Since $T$ and $T'$ are $Z$-bisimilar, there is some $w'\in T'$ such that $(w,w')\in Z$. Due to the definition of $T_1'$, it follows that $w'\in T_1'$. For the converse, assume that $w'\in T_1'$. By definition, there is some $w\in T_1$ such that $(w,w')\in Z$.
   \item This follows with the same proof as for $T_1$ and $T_1'$.
  \end{enumerate}
  Hence $M',T'\models\varphi$ as required.
  \item Let $\varphi=\Diamond\psi$. Since $M,T\models\varphi$, there exists a team $U\subseteq R(T)$ such that for each $w\in T$, the set $u(w):=R(\set{w})\cap T$ is not empty, and $M,U\models\psi$. We define a corresponding team $U'$ of $M'$ as follows: Start with $U'=\emptyset$ and then for each $(w,w')\in (T\times T')\cap Z$, do the following:
   \begin{itemize}
     \item For each $R$-successor $v$ of $w$ that is an element of $U$, since $Z$ is a modal bisimulation and $(w,w')\in Z$, there is at least one $R'$-successor $v'$ of $w'$ with $(v,v')\in Z$. Add all such $v'$ to the set $U'$.
   \end{itemize}
   
  By construction, $U'$ only contains worlds that are $R'$-successors of worlds in $T'$. Hence to show that $M',T'\models\Diamond\psi$, it remains to show that
  \begin{enumerate}
   \item for each $w'\in T'$, the team $U'$ contains a world $v'$ that is an $R'$-successor of $w'$,
   \item the teams $U$ and $U'$ are $Z$-bisimilar.
  \end{enumerate}
  
  The claim then follows by induction, since $M,U\models\psi$. We now show the above two points:
  \begin{enumerate}
   \item Let $w'\in T'$. Since $T$ and $T'$ are $Z$-bisimilar, there is some $w\in T$ with $(w,w')\in Z$. Due to the choice of $U$, there is some $v\in U$ which is an $R$-successor of $w$. By construction of the set $U'$, a world $v'$ that is an $R'$-successor of $w'$ has been added to $U'$.
   \item By construction, for every $R$-successor $v$ of some $w$ in $T$, at least one $v'$ has been added to $U'$ with $(v,v')\in Z$. For the converse, by construction of $U'$, for every $v'$ added to $U'$ there is a $v\in U$ with $(v,v')\in Z$.
  \end{enumerate}
  
  Hence $M',T'\models\Diamond\psi$ as required.
  \item Now assume that $\varphi=\Box\psi$, and let $U$ be the set of all $R$-successors of worlds in $T$, let $U'$ be the set of all $R'$-successors of worlds in $T'$. By induction, it suffices to show that $U$ and $U'$ are $Z$-bisimilar. Hence let $v\in U$ be the $R$-successor of some $w\in T$. Since $T$ and $T'$ are $Z$-bisimilar, there is some $w'\in T'$ such that $(w,w')\in Z$. Since $v$ is an $R$-successor of $w$, and since $Z$ is a modal bisimulation, there is some $v'$ which is an $R'$-successor of $w'$ such that $(v,v')\in Z$. Since $v'$ is an $R'$-successor of $w'$, it follows that $v'\in U'$. The converse direction follows analogously.
  \item Let $\varphi=\pol{p_1} \bot_{\pol{q}}\, \pol{p_2}$. To show that $M',T'\models\varphi$, let $u,u'\in T'$ with the same truth values of the variables in $q$. Since $T$ and $T'$ are bisimilar, there are worlds $w,w'\in T$ such that $(w,u)\in Z$ and $(w',u')\in Z$. Since $Z$ is a modal bisimulation, the $Z$-related worlds have the same propositional truth assignment. In particular, $w$ and $w'$ agree on the values for the variables in $q$. Since $M,T\models\varphi$, there is some world $w''\in T$ such that
  \begin{itemize}
    \item $w''\equiv_{\pol{q}}w'\equiv_{\pol{q}}w$, and since $Z$ is a modal bisimulation it follows that $w''$ and both $u$ and $u'$ have the same $\pol{q}$-assignment,
    \item $w''\equiv_{\pol{p_1}}w$, and hence $w''$ and $u$ have the same $\pol{p_1}$-assignment,
    \item $w''\equiv_{\pol{p_2}}w'$, and hence $w''$ and $u'$ have the same $\pol{p_2}$-assignment.
  \end{itemize}
  Since $T$ and $T'$ are $Z$-bisimilar, there is a world $u''\in T'$ such that $(w'',u'')\in Z$. Since $Z$ is a modal bisimulation, $w''$ and $u''$ satisfy the same propositional variables, and hence for $u''$ we have that
  \begin{itemize}
    \item $u''\equiv_{\pol q} u$, $u''\equiv_{\pol q} u'$
    \item $u''\equiv_{\pol{p_1}} u$
    \item $u''\equiv_{\pol{p_2}} u'$.
  \end{itemize}
  Therefore, $M',T'\models\varphi$ as required.
 \end{itemize}
\end{proof}

With Lemma~\ref{lemma:mil invariant under bisimulation} and an application of van Benthem's Theorem \cite{BenthemBook}, it follows directly that \MIL and \ML are in fact equivalent in expressiveness \emph{on singletons}:

\begin{theorem}\label{theorem:singleton mil expressiveness}
 For each \MIL-formula $\varphi_\MIL$, there is an \ML-formula $\varphi_\ML$ such that $\varphi_\ML$ and $\varphi_\MIL$ are equivalent on singletons.
\end{theorem}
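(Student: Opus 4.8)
The plan is to read $\varphi_\MIL$ as defining a class of \emph{pointed} Kripke models, namely $\mathcal{K}=\set{(M,w)\mid M,\set{w}\models\varphi_\MIL}$, and to show that this class is definable in ordinary modal logic. Van Benthem's theorem characterizes the modally definable classes of pointed models as exactly those that are (i) invariant under modal bisimulation and (ii) first-order definable. So I would verify these two hypotheses and then read off $\varphi_\ML$ from the conclusion of van Benthem's theorem; the final equivalence on singletons is then immediate, since for an \ML-formula we have $M,w\models\varphi_\ML$ iff $M,\set{w}\models\varphi_\ML$ (as noted after the semantics definition).

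The bisimulation-invariance of $\mathcal{K}$ is exactly the step that Lemma~\ref{lemma:mil invariant under bisimulation} is tailored for. Suppose $(M,w)$ and $(M',w')$ are bisimilar via a modal bisimulation $Z$ with $(w,w')\in Z$. Then the singleton teams $\set{w}$ and $\set{w'}$ are $Z$-bisimilar in the sense of Definition~\ref{Z-bisimilarity}: both conditions hold trivially, since the unique world of each team is $Z$-related to the unique world of the other. Lemma~\ref{lemma:mil invariant under bisimulation} then gives $M,\set{w}\models\varphi_\MIL$ iff $M',\set{w'}\models\varphi_\MIL$, so membership in $\mathcal{K}$ respects bisimulation.

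The hard part will be the first-order definability of $\mathcal{K}$. Lemma~\ref{lemmaeso} only provides a translation into \emph{existential second-order} logic, and over the class of all models this does not collapse to first-order logic in general, so the second-order quantifiers have to be removed by hand in the singleton case. The key observation is that evaluating a \emph{fixed} formula $\varphi_\MIL$ of modal depth $d$ at a singleton $\set{w}$ inspects only worlds within distance $d$ of $w$, and each (in)dependence atom merely asserts the existence of a bounded number of worlds realizing prescribed propositional types. I would therefore specialize the translation of Lemma~\ref{lemmaeso} by interpreting the team predicate $T$ as the singleton $\set{x}$, and replace each guessed subteam (the monadic second-order variables arising from the $\vee$, $\Box$, and $\Diamond$ clauses) by explicit bounded existential first-order quantification over witness worlds, obtaining a first-order formula $\alpha(x)$ with $M,\set{w}\models\varphi_\MIL$ iff $M\models\alpha(w)$.

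With both hypotheses in place, van Benthem's theorem yields a modal formula $\varphi_\ML$ defining $\mathcal{K}$, which is the formula equivalent to $\varphi_\MIL$ on singletons. As a sanity check on the crux, note that bisimulation-invariance already forces $\mathcal{K}$ to depend only on the depth-$d$ bisimulation type of $(M,w)$; since there are only finitely many such types over the finitely many variables occurring in $\varphi_\MIL$, and each is definable by a characteristic modal formula of depth $d$, the class $\mathcal{K}$ is a finite union of modally definable classes. This both confirms the first-order (indeed modal) definability needed for van Benthem's theorem and, if one prefers, furnishes a van-Benthem-free route to the same conclusion.
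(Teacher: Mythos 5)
Your proposal is correct and follows essentially the same route as the paper: bisimulation invariance on singletons via Lemma~\ref{lemma:mil invariant under bisimulation} (with team- and world-bisimilarity coinciding for singletons), locality in the modal depth $\md{\varphi_\MIL}$, and then van Benthem's theorem---and your closing ``sanity check,'' the finite disjunction of characteristic formulas of the finitely many depth-$\md{\varphi_\MIL}$ bisimulation types, is precisely the paper's own parenthetical construction of $\varphi_\ML$. The only caveat is that your intermediate plan for first-order definability, namely eliminating the second-order quantifiers of Lemma~\ref{lemmaeso} by ``bounded existential quantification over witness worlds,'' is not justified as stated (the subteams guessed for $\vee$ and $\Diamond$ need not have bounded size, even when starting from a singleton), but this step is superfluous since your final argument already yields modal, hence first-order, definability.
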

\begin{proof}
	 Due to Lemma~\ref{lemma:mil invariant under bisimulation}, we know that \MIL is invariant under bisimulation of teams. Since for singleton teams, bisimulation on teams and bisimulation on worlds coincide, it follows that \MIL on singletons is invariant under modal bisimulation. Clearly, when evaluating an \MIL-formula $\varphi_\MIL$ on a singleton team $\set w$, all worlds in the model that have a distance from $w$ which exceeds the modal depth (i.e., maximal nesting degree of modal operators) of $\varphi$, are irrelevant for the question whether $M,\set{w}\models\varphi$. Therefore, $\varphi_\MIL$, evaluated on singletons, captures a property of Kripke models that is invariant under modal bisimulation and only depends on the worlds that can be reached in at most $\md{\varphi_\MIL}$ steps. Due to~\cite{vanbenthPHD}, such a property can be encoded by a standard modal logic formula $\varphi_\ML$. (The formula $\varphi_\ML$ can be obtained, for example, as the disjunction of formulas $\varphi_{M,w}$ which, for each model $M$ and world $w$ with $M,\set{w}\models_\MIL\varphi_\MIL$, encodes the finite tree unfolding of $M,w$ up to depth $\md{\varphi_\MIL}$, up to bisimulation, and only taking into account the variables appearing in $\varphi_\MIL$. This unfolding is finite and hence first-order definable, therefore we can apply the result from~\cite{vanbenthPHD}.)
\end{proof}

Since the application of van Benthem's Theorem yields a potentially very large formula, the above result does not give a ``efficient'' translation from \MIL to \ML. It turns out that one cannot do much better: \MIL is exponentially more succinct than \ML.

\begin{theorem}\label{theorem:succinctness}
 There is a family of \MIL-formulas $(\varphi_i)_{i\in\mathbb N}$ such that the length of $\varphi_i$ is quadratic in $i$, and for any family of \ML-formulas $(\psi_i)_{i\in\mathbb N}$ such that for all $i$, $\varphi_i$ and $\psi_i$ are equivalent on singletons, the length of $\psi_i$ grows exponentially in $i$.
\end{theorem}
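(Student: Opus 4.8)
The plan is to take for $\varphi_i$ a formula that, evaluated at a singleton, forces the set of successor worlds to realize \emph{all} $2^i$ truth-value combinations over fresh variables $x_1,\dots,x_i$, and then to show that no short \ML-formula can express this. Concretely, I would set $\varphi_i := \Box\bigl((x_1\bot_\emptyset x_2)\wedge((x_1,x_2)\bot_\emptyset x_3)\wedge\cdots\wedge((x_1,\dots,x_{i-1})\bot_\emptyset x_i)\bigr)$. The $j$-th conjunct has size $O(j)$, so $|\varphi_i|=O(i^2)$. Unfolding the semantics of the independence atom with empty subscript, $M,\{q_0\}\models\varphi_i$ iff the successor team $R(\{q_0\})$ is a \emph{combinatorial box}, i.e. the set of realized combinations is a product $X_1\times\cdots\times X_i$. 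I then fix two families of two-level models: $M^+$, whose root $q_0$ has exactly $2^i$ terminal successors realizing every combination in $\{0,1\}^i$, and, for each $b\in\{0,1\}^i$, the model $M^-_b$ obtained by deleting the successor realizing $b$. Since the full cube is a box while a cube minus one point is never a box (its cardinality $2^i-1$ is odd and greater than $1$, hence cannot equal a product of factors from $\{1,2\}$), $\varphi_i$ holds at $\{q_0\}$ in $M^+$ but fails in every $M^-_b$.

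Now suppose $\psi_i$ is an \ML-formula equivalent to $\varphi_i$ on singletons. By the singleton/flatness remark, in ordinary modal semantics $\psi_i$ is true at the root of $M^+$ and false at the root of every $M^-_b$. I would give all roots identical labels, so root literals cannot distinguish the models, and since every successor is terminal, any subformula starting with a modality evaluates \emph{at a leaf} to a constant ($\Diamond\delta$ to false, $\Box\delta$ to true). Hence, after putting $\psi_i$ into negation normal form, its truth value at the root is a monotone $\wedge/\vee$-combination of modal atoms $\Diamond\gamma$ and $\Box\gamma$, where each $\gamma$ determines a set $S_\gamma\subseteq\{0,1\}^i$ of leaf-combinations at which it holds. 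Writing $C$ for the set of combinations present among the successors, the atom $\Diamond\gamma$ has value $[C\cap S_\gamma\neq\emptyset]$ and $\Box\gamma$ has value $[C\subseteq S_\gamma]$, so the root value of $\psi_i$ is a fixed monotone Boolean function $f$ of these atoms.

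The core step is a flip argument. Passing from $C=\{0,1\}^i$ (model $M^+$) to $C=\{0,1\}^i\setminus\{b\}$ (model $M^-_b$) can only \emph{increase} every $\Box$-atom and only \emph{decrease} every $\Diamond$-atom. As $f$ is monotone nondecreasing in its atoms yet flips from $1$ to $0$, at least one $\Diamond$-atom $\Diamond\gamma_b$ must flip from true to false; this forces $\{0,1\}^i\cap S_{\gamma_b}=\{b\}$, i.e. $S_{\gamma_b}=\{b\}$. The singletons $\{b\}$ are pairwise distinct, so the subformulas $\gamma_b$ are pairwise semantically distinct, whence $\psi_i$ contains at least $2^i$ distinct subformulas and $|\psi_i|\geq 2^i$. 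With $n=i$ this is exponential in $i$ while $|\varphi_i|$ is quadratic, as required.

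The main obstacle is making the reduction of $\psi_i$ to a monotone combination of leaf-atoms fully rigorous: one must argue that deeper modal nesting genuinely collapses to constants at the terminal leaves, that root literals are neutral, and that pushing negations inward does not change which subformulas occur (so that counting distinct subformulas still lower-bounds the length). The monotonicity bookkeeping in the flip argument---isolating that only a $\Diamond$-atom can be responsible for the $1\to 0$ transition---is the delicate point. An alternative packaging, if one prefers, is to phrase the same separation via the Adler--Immerman formula-size game on $\{M^+\}$ versus $\{M^-_b : b\in\{0,1\}^i\}$ and to show that every Spoiler strategy has at least $2^i$ leaves.
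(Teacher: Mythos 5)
Your proof is correct, but it takes a genuinely different route from the paper's on both ends. For the formula family, the paper reuses the dining-cryptographers specification of Section~\ref{sect:dining cryptographers} (with $\Diamond\Diamond$ collapsed to $\Diamond$), whereas you distill the same mechanism into the bare chain $(x_1\bot_\emptyset x_2)\wedge\dots\wedge((x_1,\dots,x_{i-1})\bot_\emptyset x_i)$; your observation that this chain holds exactly when the realized assignment set is a product $X_1\times\dots\times X_i$ is correct (each conjunct forces $\pi_{1..j}(C)=\pi_{1..j-1}(C)\times\pi_j(C)$, and chaining gives the box), and the full-cube versus punctured-cube separation via the cardinality/parity argument works for $i\ge 2$. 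For the lower bound, the paper argues that a depth-one \ML-formula with $k$ occurrences of $\Diamond$ is already satisfied in a submodel with at most $k$ successors, while the \MIL-property forces exponentially many successors with distinct assignments---a short argument, but it leans on the somewhat breezy step ``without loss of generality only $\Diamond$ appears.'' Your monotone flip argument (going from $C=\{0,1\}^i$ to $C\setminus\{b\}$ can only raise $\Box$-atoms and lower $\Diamond$-atoms, so a $1\to 0$ flip of the whole formula pins some $\Diamond\gamma_b$ with $S_{\gamma_b}=\{b\}$, yielding $2^i$ pairwise distinct subformulas) handles $\Box$-subformulas explicitly and is the more robust and self-contained of the two; the price is the bookkeeping you yourself flag (collapsing deeper modalities at terminal leaves, neutrality of root literals, and the fact that NNF conversion changes the length only linearly), all of which goes through. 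Both arguments ultimately exploit the same phenomenon---the \MIL-formula forces exponentially many distinct successor assignments while each ``reason'' an \ML-formula can cite for a successor costs a $\Diamond$---so the proofs differ in packaging and rigor rather than in the underlying obstruction.
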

\begin{proof}
	 Let $\varphi_i$ be the formula describing the security property of the dining cryptographers protocol in Section~\ref{sect:dining cryptographers}, with every sequence $\Diamond\Diamond$ replaced with $\Diamond$. As argued in that section, if $\varphi_i$ is satisfied at $M,\set{w}$, then the number of propositional assignments appearing in the set of worlds that can be reached from $w$ in one step (two steps for the original formula) is exponential in $i$. Now let $(\psi_i)_{i\in\mathbb N}$ a family of \ML-formulas such that $\varphi_i$ and $\psi_i$ are locally equivalent for each $i$. 
 Since $\psi_i$ is locally equivalent to $\varphi_i$, we can without loss of generality assume that $\md{\psi_i}=1$ for all $i$ (if $\psi_i$ contains deeper nestings of modal operators, the formula can be simplified since the truth value of $\psi_i$ cannot depend on worlds reachable in $2$ or more steps). Therefore, modal operators do not appear nested in $\psi_i$. Without loss of generality, we can assume that only $\Diamond$ appears in $\psi$. It is clear that if $M,w\models\psi_i$, then there is a submodel of $M$ which contains $w$, and in which the number of successors of $w$ is bounded by the number of $\Diamond$-operators appearing in $\psi_i$. Therefore, $\psi_i$ must have an exponential number of $\Diamond$-operators, which proves the theorem.
\end{proof}


\subsection{Expressiveness of \ML with Generalized Dependence Atoms}

Theorem~\ref{theorem:singleton mil expressiveness} applies not only to \MIL, but (with the same proof) to all extensions of \MIL with a dependence operator that, evaluated on a team $T$, depends only on the set of propositional assignments that occur in some world of the team. This is because the set of assignments is clearly invariant under bisimulation. As examples, operators like -those from exclusion logic or inclusion logic can be added to \MIL without increasing its expressiveness. Hence, in light of Section~\ref{sect:generalized dependence}, a natural question to ask is the following: For which generalized dependence atoms $D$ is the logic \MILD as expressive as \ML on singletons, and which atoms do in fact add expressiveness?

--It is easy to see that there are generalized dependence atoms $D$ that, even on singletons, add expressiveness beyond classical modal logic (and hence, beyond \MIL). This is because with no restriction on the dependence operator $D$, one can express properties that depend on the \emph{number} of worlds in a team, which clearly cannot be done in ML. As an example, consider the following:

\medskip

\scalebox{0.75}{
\begin{minipage}{1.5cm}
\begin{center}
 \begin{tikzpicture}[->,>=stealth',shorten >=1pt,auto,node distance=1.75cm, semithick]
   \tikzstyle{every state}=[fill=white,draw=black,text=blue]
   \node[state] at (0,0)   (A) {\begin{small}$w$\end{small}};
   \node[state] at (0,2)   (B) {\begin{small}$u$\end{small}};
   \draw [->] (A) edge (B);
\end{tikzpicture}
Model $M$
\end{center}
\end{minipage}
\hspace*{1cm}
\begin{minipage}{3cm}
\begin{center}
\begin{tikzpicture}[->,>=stealth',shorten >=1pt,auto,node distance=1.75cm, semithick]
   \tikzstyle{every state}=[fill=white,draw=black,text=blue]
   \node[state] at (0,0)   (A) {\begin{small}$w'$\end{small}};
   \node[state] at (-1,2)  (B) {\begin{small}$u_1'$\end{small}};
   \node[state] at (1,2)   (C) {\begin{small}$u_2'$\end{small}};
   \draw [->] (A) edge (B);
   \draw [->] (A) edge (C);
\end{tikzpicture}
Model $M'$
\end{center}
\end{minipage}}

\medskip

\begin{example}
 Let $D$ be the relation $\{(0)\}$, and consider the models $M$ and $M'$ above, where the single variable $x$ is false in every world of both models.
 It is easy to see that $M,\set w\models\Box D(x)$, while $M',\set{w'}\not\models\Box D(x)$: For the model $M$, the set of successors of $w$ is the team $T=\set u$, since $x$ is false in $u$, it follows that $T(x)=(0)$, hence $T\models D(x)$. On the other hand, the set of successors of $w'$ in $M'$ is $T'=\set{u_1,u_2}$, hence $T'(x)=(0,0)$ and $T'\not\models D(x)$.
 
 Clearly, no \ML-formula can distinguish $M,w$ and $M',w'$, since the relation $Z=\set{(w,w'),(u,u_1),(u,u_2)}$ is a bisimulation.
\end{example}
%
%

However, as mentioned earlier, the proof of Theorem~\ref{theorem:singleton mil expressiveness} can be generalized to handle the generalized dependence atoms discussed earlier. In general, we obtain the following result: For a $D$ which is first-order definable, the expressiveness of \MILD coincides with that on \ML for singletons. For the proof, we show that \MILD remains invariant under bisimulation, and then apply the proof of  Theorem~\ref{theorem:singleton mil expressiveness} again. On the other hand, clearly if $D$ is not FO-definable, then $D$ cannot be expressed in modal logic, as modal logic can be translated to first-order logic with the standard translation. Hence we obtain the following theorem:

\begin{theorem}\label{theorem:mil+d expressiveness singletons}
 Let $D$ be a generalized dependence atom. Then the following statements are equivalent:
 \begin{enumerate}
  \item $D$ can be expressed in first-order logic,
  \item \MILD and \ML are equally expressive over singletons, i.e., for each \MILD-formula $\varphi_\MIL$, there is an \ML-formula $\varphi_\ML$ such that $\varphi_\MIL$ and $\varphi_\ML$ are equivalent on singletons.
 \end{enumerate}
\end{theorem}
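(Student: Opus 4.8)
The plan is to prove the two directions of the equivalence separately. The implication from (2) to (1) is essentially immediate: if $\MILD$ collapses to $\ML$ on singletons, then in particular the atom $D$ itself—or rather any singleton-evaluable property it induces—is expressible in $\ML$, and since $\ML$ translates into first-order logic via the standard translation, $D$ must be $\FO$-definable. More carefully, I would argue the contrapositive, observing that if $D$ were \emph{not} $\FO$-definable, then some $\MILD$-formula (built directly from $D$) could distinguish two structures that are elementarily equivalent (indeed bisimilar) on the relevant singleton teams, contradicting equivalence with $\ML$, which is bisimulation-invariant. This direction is the easy one and I would dispatch it in a sentence, as the excerpt already flags.

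The substantive direction is (1) implies (2), and here the plan is to reuse the machinery of Theorem~\ref{theorem:singleton mil expressiveness} essentially verbatim, with the independence atom replaced by an arbitrary $\FO$-definable $D$. First I would prove the analogue of Lemma~\ref{lemma:mil invariant under bisimulation}: namely that $\MILD$ is invariant under $Z$-bisimulation of teams whenever $D$ is $\FO$-definable. The induction on formula structure is identical to that lemma's proof for every connective and modality; the only new case is the atom $D(p_1,\dots,p_n)$. Here the key observation is that whether $\Model,T\models D(p_1,\dots,p_n)$ holds depends \emph{only} on the set of propositional assignments (restricted to $p_1,\dots,p_n$) that occur among the worlds of $T$—because $D$ is invariant under permutation of rows, multiplicities of identical rows are irrelevant to membership in $D$ precisely when $D$ is definable by a first-order sentence, which cannot count beyond a fixed threshold. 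Since $Z$-bisimilar teams $T$ and $T'$ realize exactly the same set of propositional assignments (each $w\in T$ has a $Z$-partner in $T'$ with identical labeling, and vice versa), the first-order structures $\mathcal{A}_T$ and $\mathcal{A}_{T'}$ attached to the atom are elementarily equivalent, so $\mathcal{A}_T\models\phi\Leftrightarrow\mathcal{A}_{T'}\models\phi$ for the defining sentence $\phi$, giving $\Model,T\models D(\pol{p})\Leftrightarrow\Model',T'\models D(\pol{p})$.

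Once bisimulation-invariance of $\MILD$ is in hand, the remainder follows exactly as in the proof of Theorem~\ref{theorem:singleton mil expressiveness}: on singleton teams $Z$-bisimulation of teams coincides with ordinary modal bisimulation of worlds, so a $\MILD$-formula evaluated on singletons defines a bisimulation-invariant class of pointed models that, moreover, depends only on worlds reachable within $\md{\varphi_\MIL}$ steps. By van Benthem's characterization theorem~\cite{vanbenthPHD}, such a property is definable by a standard $\ML$-formula $\varphi_\ML$, yielding equivalence on singletons.

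The main obstacle is the atom case of the bisimulation-invariance lemma, and specifically justifying \emph{why} $\FO$-definability (rather than mere permutation-invariance) guarantees dependence on only the \emph{set} of realized assignments. The point is that two teams realizing the same set of assignments may differ in how many worlds realize each assignment; a general permutation-invariant $D$ could detect these multiplicities (as the counting example with $D=\{(0)\}$ in the excerpt shows), but a first-order sentence $\phi$ cannot distinguish a structure $\mathcal{A}_T$ from one obtained by duplicating or deduplicating elements that satisfy the same monadic predicates—these structures are first-order indistinguishable because one embeds into the other preserving and reflecting all atomic types and $\phi$ lives over the purely monadic signature $\langle A_1,\dots,A_n\rangle$. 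I would make this precise by noting that over a finite monadic signature, any two structures with the same set of realized atomic $1$-types satisfy the same first-order sentences, which is the crux that ties $\FO$-definability to bisimulation-invariance and which cleanly separates the two cases of the theorem.
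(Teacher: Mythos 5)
Your proposal is correct and follows essentially the same route as the paper: direction (2)$\Rightarrow$(1) via the standard translation of the \ML-formula equivalent to $\Box D(x_1,\dots,x_n)$ on singletons, and direction (1)$\Rightarrow$(2) by extending the bisimulation-invariance lemma (Lemma~\ref{lemma:mil invariant under bisimulation}) to the atom case --- the paper does this by a direct induction on the defining first-order formula over tuples of $Z$-related worlds, which is just an unpacked version of your appeal to the fact that monadic structures realizing the same $1$-types are elementarily equivalent --- and then reusing the van Benthem argument of Theorem~\ref{theorem:singleton mil expressiveness}. The one caveat, shared with the paper's own proof (whose quantifier-free base case silently assumes no equality atoms, and whose example $D=\{(0)\}$ would otherwise be \FO-definable), is that your crux fact holds only for \emph{equality-free} first-order logic over the monadic signature, so ``expressed in first-order logic'' must be read in that sense.
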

\begin{proof}
	 We first assume that $D$ is FO-definable.
 As mentioned above, it suffices to adapt the proof of Lemma~\ref{lemma:mil invariant under bisimulation} to \MILD. Clearly, in the induction, we only need to cover the case that $\varphi=D(p_1,\dots,p_n)$ for propositional variables $p_1,\dots,p_n$. Hence assume that $M,T\models D(p_1,\dots,p_n)$, we show that $M',T'\models D(p_1,\dots,p_n)$, where $T$ and $T'$ are $Z$-bisimilar for a modal bisimulation $Z$.
 
 Let $\phi$ be the first-order formula defining $D$. We prove the claim inductively over the formula, where we only cover the key cases explicitly. Let the free variables of $\phi$ be $\omega_1,\dots,\omega_t$. We show that if $w_1,\dots,w_t\in T$ and $w_1',\dots,w_n'\in T'$ such that $(w_i,w_i')\in Z$ for all $i$, then $\phi(w_1,\dots,w_n)$ evaluates to true if and only if $\phi(w_1',\dots,w_n')$ does. If $\phi$ is quantifier-free, the claim is clear: Since $Z$ is a modal bisimulation, $w_i$ and $w_i'$ satisfy the same propositional variables, and due to the choice of $\phi$, the truth value of $\phi$ only depends on the propositional assignments of the worlds instantiating the variables $\omega_1,\dots,\omega_n$. The second relevant case is when $\phi=\exists \omega\psi(\omega,\omega_1,\dots,\omega_t)$. If $M,T\models \phi$, then there is a world $w\in T$ such that $\psi(w,w_1,\dots,w_t)$ is true. Since $T$ and $T'$ are $Z$-bisimilar, it follows that there is a world $w'\in T'$ such that $(w,w')\in Z$. Due to induction, it follows that if $\psi(w,w_1,\dots,w_t)$ is true, then so is $\psi(w',w_1',\dots,w_t')$. This completes the proof that \MILD and ML are equally expressive over singletons.
 
 For the converse, assume that \MILD is as expressive as ML over singletons. In particular, then for every sequence $x_1,\dots,x_n$ of variables, there is a modal formula $\varphi$ such that for every model $M$ and every world $w\in M$, we have that $M,w\models\varphi$ if and only if $M,\set{w}\models\Box D(x_1,\dots,x_n)$. By the standard translation from modal logic to first-order logic, this implies that $D(x_1,\dots,x_n)$ can be expresses as an FO-formula.
\end{proof}

\section{Conclusion and Open questions}

In this paper we introduced modal independence logic \MIL and settled the computational complexity of its satisfiability and model checking problem. Furthermore we compared the expressivity of \MIL with that of classical modal logic. It turned out that most of our results can be generalized to modal logic extended with various so called generalized dependence atoms. 

We end this paper by the following interesting open questions:
\begin{enumerate}
\item Are there classes of frames definable in \MIL that cannot be defined with \ML?
\item Is it possible to formulate and prove a version of  van Benthem's Theorem for our generalization of bisimilarity from the world level to the team level (see Definition \ref{Z-bisimilarity})?
\end{enumerate}

\bibliographystyle{aiml14}
\bibliography{mil}

\end{document}